\newtheorem{theorem}{Theorem}
\newtheorem{corollary}{Corollary}[theorem]
\begin{document}
\setstcolor{red}
\newtheorem{Proposition}{Proposition}[section]	

\title{On Simultaneous Information and Energy Transmission through Quantum Channels} 
\date{\today}

\author{Bishal Kumar Das}
\email{bishal.9197@physics.iitm.ac.in}
\affiliation{Department of Physics, Indian Institute of Technology Madras, Chennai, India, 600036}
\affiliation{Centre for Quantum Information, Communication, and Computing (CQuICC), Indian Institute of Technology Madras, Chennai, India 600036}

\author{Lav R.\  Varshney}
\email{varshney@illinois.edu}
\affiliation{Department of Electrical and Computer Engineering, University of Illinois Urbana-Champaign, Urbana, Illinois 61801, USA}

\author{Vaibhav Madhok}
\email{madhok@physics.iitm.ac.in}
\affiliation{Department of Physics, Indian Institute of Technology Madras, Chennai, India, 600036}
\affiliation{Centre for Quantum Information, Communication, and Computing (CQuICC), Indian Institute of Technology Madras, Chennai, India 600036}

\begin{abstract}
The optimal rate at which information can be sent through a quantum channel when the transmitted signal must simultaneously carry some minimum amount of energy is characterized.  To do so, we introduce the quantum-classical analogue of the capacity-power function and generalize results in classical information theory for transmitting classical information through noisy channels. We show that the capacity-power function for a classical-quantum channel, for both unassisted and private protocol, is concave and also prove additivity for unentangled and uncorrelated ensembles of input signals for such channels. This implies we do not need regularized formulas for calculation. We show these properties also hold for all noiseless channels when we restrict the set of input states to be pure quantum states. For general channels, we geometrically prove that the capacity-power function is piecewise concave, with further support from numerical simulations.
 Further, we connect channel capacity to properties of random quantum states. In particular, we obtain analytical expressions for the capacity-power function for the case of noiseless channels using properties of random quantum states under an energy constraint and concentration phenomena in large Hilbert spaces. 
\end{abstract}
\maketitle

\section{Introduction}

How physical systems process information and exchange energy is crucial to gaining insights into the workings of our universe. For example, the connections between entropy, information, and thermodynamics form the cornerstone of statistical mechanics. 
The study of phase transitions that involve an interplay between free energy and entropy, or Maxwell's demon paradox and its resolution via Landauer's principle are important examples. 
``Information is physical" as stated by Landauer \cite{launder1991} implies that the information processing capability of a device is inseparable from its physical properties. Quantum information processing has  extended these ideas to the quantum domain \cite{nielsenchuang,wilde_book}. 
However, there is very little work considering quantum communication under physical constraints like energy. 

%\textcolor{red}{
The importance of simultaneous transmission of energy and information, two fundamental physical quantities, or the transmission of information under any constraint is of central importance in quantum information theory.
%}

%\textcolor{red}{
For example, quantum metrology deals with maximizing Fisher information 
in the presence of losses and under constraints of finite laser power  and mean photon number. For example, let pure states be used in the arms of the interferometer and $\varphi$ be the parameter to be estimated. 
Then $\varphi$ dependent changes in the state in the interferometer is generated by the Hamiltonian, 
\begin{equation}
	\label{eq:Hprobe}
	H_{\rm probe} = \varphi H.
\end{equation}
 The quantum Cram\'{e}r-Rao bound bound tells us that the measurement uncertainty is bounded from below as
\begin{equation}
	\label{eq:crb1}
	\delta \varphi \geq \frac{1}{2 \sqrt{\langle \Delta^{2} H \rangle}},
\end{equation}
where $\langle \Delta^{2}H \rangle$ is the variance of $H$ \cite{caves_braunstein}. 
It turns out that the mean photon number of the state in a wave-based interferometer with additional constraint like the requirement that the states in the two modes in the interferometer be both Gaussian constrains the minimum achievable uncertainty and therefore maximizes the Fisher information.
%}

%\textcolor{red}{

Rather than considering quantum metrology under constraints, here we consider quantum communication under power requirements.
 There are quantities like the mutual information, as fundamental as the Fisher information, which capture the rate of communication between two parties. 
In this work, we are interested in maximizing channel capacity under constraints, which is intimately related to mutual information. 
%}

Classical information theory has shown a fundamental tradeoff between transmitting information and energy simultaneously  {using the same patterned energy signal}, defined over a set of symbols that have different energy values \cite{varshney2008transporting, grover2010shannon}.  To transmit maximal information, the capacity-achieving input distribution over symbols should be used, whereas to maximize energy transmission, the most energetic symbol should be used all the time.  There is a tension in trying to do both with the same signal.  Although such information-theoretic analysis has inspired numerous practical engineering (communication and energy) systems such as energy-harvesting communications or powerline communications in the energy grid  \cite{clerckx2021wireless}, it also addresses the fundamental relationship among basic physical quantities such as energy, matter, and information \cite{varshney2012}.

\begin{figure*}
\centering

\includegraphics[height = 9cm, width = 17cm]{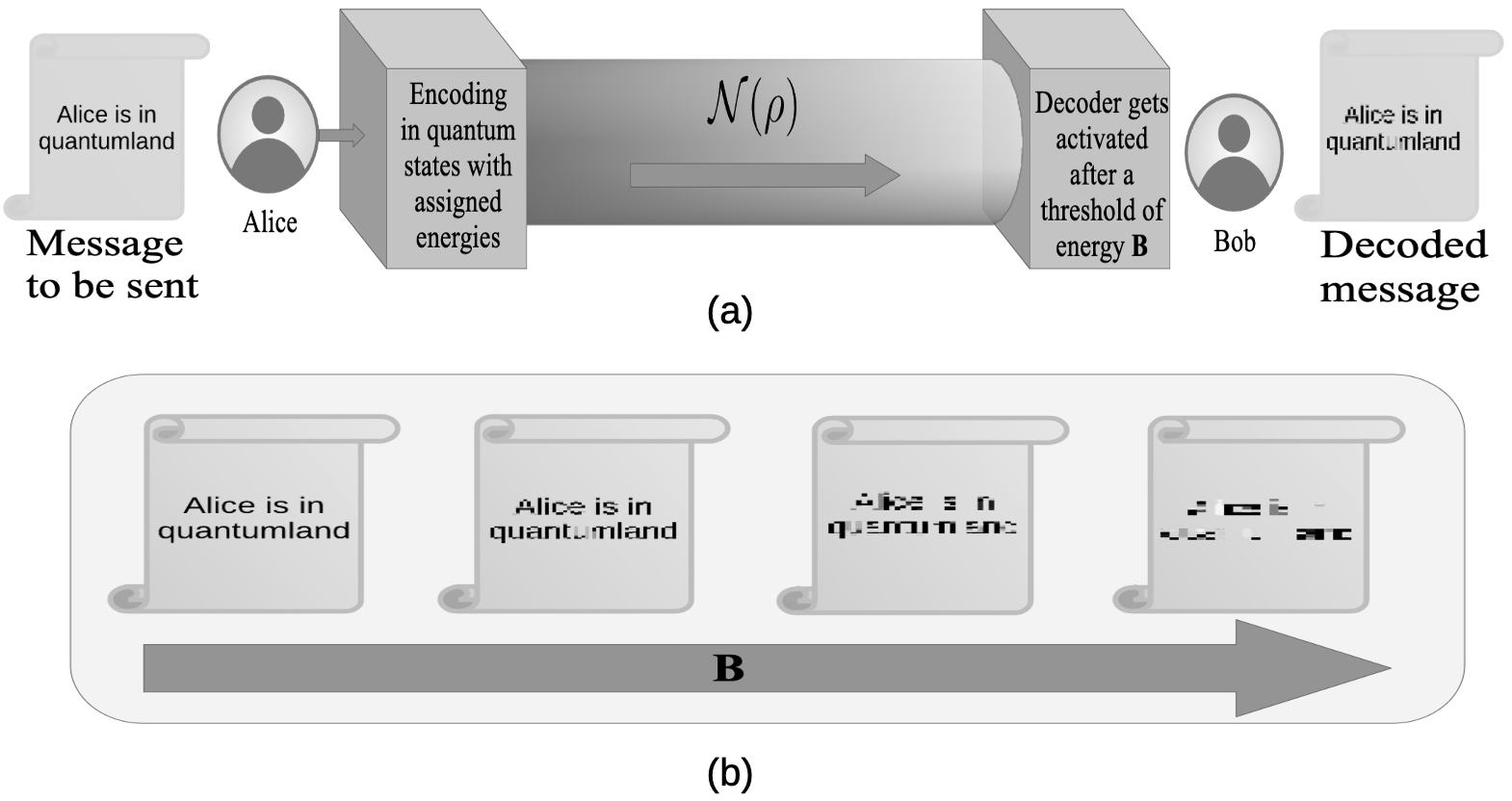}
\caption{Graphical summary: The effect of threshold energy on the  information at the receiver: (a) The information is encoded in quantum states, which are sent through a physical channel. At the output, a detector captures the states and decodes the information. However, the detector is activated only after a certain threshold of energy $B$, carried by the output states. (b) A pictorial representation of the effect of an increase in $B$ on the decoded information. As the threshold of energy $B$ increases, information content at the receiver becomes poorer as shown by the increasingly mangled message at the receiver.}
\label{graphical_represntation}
\end{figure*}

Communication has been a central topic in quantum information science: Sending classical or quantum information through a quantum channel has been studied extensively \cite{Holevo2002,Holevo_some_estimates,holevo2004entanglement,Caves_bosonic_RevModPhys.66.481,gyongyosi2018survey,Lyod_PhysRevA.55.1613,Shor_2003,Devetak2005,Additivity_of_the_classical_capacity_Shor, Schaumacher,HSW, holevo_gen_signal,Evaluating_capacities_of_bosonic_Gaussian_channels,wilde_book,nielsenchuang,preskilllecture,hastings2009superadditivity}. Quantum communication is often performed using photons to transmit information over quantum channels.  But photons also carry energy proportional to their wavelength. Here, we ask whether there is an analogous tension between transmitting information and energy simultaneously over quantum channels, as there is over classical channels. 
In particular, we characterize systems
that simultaneously meet two goals:
\begin{enumerate}
	\item Large received energy per unit time (large received power), and
	\item Large information per unit time.
\end{enumerate}

A classical picture of energy-constrained classical communication rate per unit cost was developed by Verdu \cite{verdu}, where he introduced the notion of \emph{capacity per unit cost function}, which is the capacity of a channel under a cost constraint on the input signal. He gave a simplified formula for the capacity of such a channel per unit cost when there is a free symbol. Extending the same formulations to classical-quantum channels \footnote{We will revisit these definitions later in the manuscript.}, some previous works consider sending classical information through quantum channels under a maximum energy constraint on the input.  Notably \cite{wildecapacitycost,PhysRevA.96.032340}  gave a simplified formula for capacity per unit cost when there exists a zero-cost state. 

Here we consider requiring minimum received power rather than constraining transmit power, and consider the entire tradeoff curve to define quantum analogues of previously studied classical capacity-power functions \cite{varshney2008transporting}. We give new results and proofs regarding their properties.
    In particular, we analytically prove that the capacity-power function is concave for single-use unassisted communication (i.e.\ without any pre-shared entanglement) for the classical-quantum channel and for private communication through any degradable classical-quantum channel. We further prove additivity for these channels, so we do not need regularized versions to calculate the capacity-power function.
 We show these properties also hold for all noiseless channels when we restrict the set of input states to be pure quantum states. For general channels, we find that the capacity-power function is piecewise concave. We give a visual proof for this supported by numerical simulations. Lastly, we show how entropy of random quantum states under an energy constraint can be related to noiseless capacity-power functions.

 In studying the information rate when imposing a condition that the channel output must carry at least some energy,
we calculate energy using  Hamiltonians that are  relevant in physical settings. We also give numerical examples to demonstrate our analytical calculations for three types of quantum channels: beam-splitting channel, depolarizing channel, and amplitude damping channel. Further, we obtain analytical expressions for the capacity-power functions for noiseless channels using properties of random quantum states.

This provides both engineering utility in understanding the limits of quantum communications and broader insight into the basic physical interplay between energy and information.  Figure \ref{graphical_represntation} is a schematic diagram of the protocol we consider. This work deals with the fundamental tradeoff between transmitting energy and transmitting information over a single noisy line.
The idea of energy-constrained private and quantum capacity was introduced in \cite{energy_constrained_private_and_quantum_capacity}. Unlike these past works and  traditional transmitter power constraints in information theory, where small transmitted power is desired, here large received power is desired. Except for works on reversible computing \cite{Varshney2010}, the fact that matter/energy must go along with information does not seem to have been considered in information theory. Similarly, the information carried in power transmission seems not to have been considered in power engineering.

\section{Background}

\begin{figure}
    \centering
    \includegraphics[scale = 0.25]{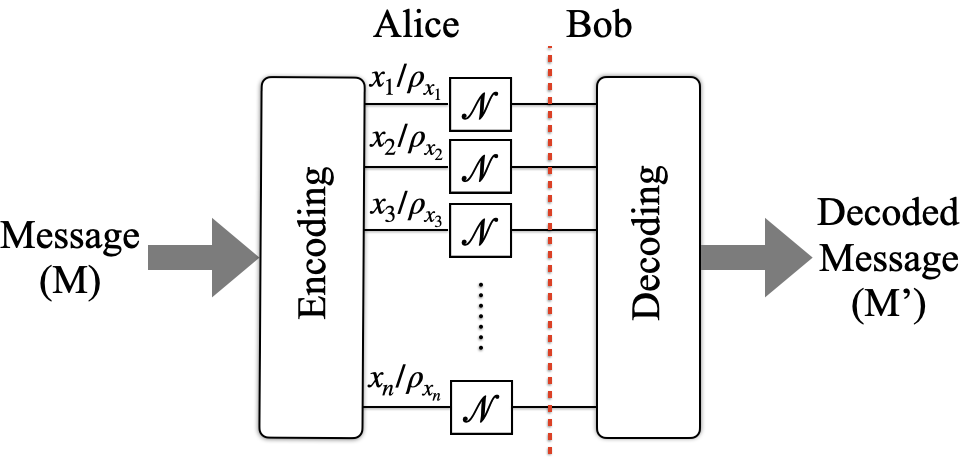}
    
    \caption{Typical protocol  for communication through a quantum channel with blocklength $n$.}
    \label{fig:pictorial representation of channel}
\end{figure}
To explain the information-energy transmission setting, let us review the classical protocol that maximizes channel capacity under the constraint that the expected energy of the output should be greater than a given threshold \cite{varshney2008transporting}.  First, we introduce some notation.  

Let $X$ be a random variable with mass function $p(\cdot)$ and $X^{n}$ be a vector random variable. 
 Let $x$ be the realized value taken by the random variable $X$ from the alphabet $\mathcal{X}$, and $x^{n}$ be the vector of values taken by $X^{n}$ from alphabet $\mathcal{X}^n$. The entropy of a random variable is $\mathcal{H}(X) =- \sum_{x\in \mathcal{X}}p(x)\ln p(x)$. For a binary random variable with parameter $p$, the binary entropy is $h(p) = -p\log p - (1-p)\log(1-p)$. For a given density matrix, $\rho$, the {von Neumann entropy is $S(\rho)=-tr (\rho \ln \rho)$.} The mutual information, classical as well as quantum, is denoted by $I(X:Y)$ and will be clear from context.

A diagram of communication through a classical channel is shown in Fig.~\ref{fig:pictorial representation of channel} with input alphabet  $\mathcal{X}$ and output alphabet $\mathcal{Y}$. For a block of length $n$, the channel input is $(X_1, X_2, \ldots, X_n) = X^{n}$ and the channel output is $(Y_1, Y_2, \ldots, Y_n) = Y^{n}$. The probability distribution of the input strings is $p_{X^{n}}(x_1, x_2, \ldots, x_n)$  where $(x_1, x_2, \ldots, x_n)$ is an instance of $X^{n}$. Similarly, $p_{Y^{n}}(y_1, y_2, \ldots, y_n)$ is the output probability distribution. Noise in the channel is characterized by the transition probability assignment $Q_{Y|X}(\cdot|\cdot)$.  Each output letter $y\in\mathcal{Y}$ has an associated energy function $b(y)$, and an energy constraint is applied to the output string. The average energy received is $E[b(Y^{n})] = \sum_{y^{n} \in \mathcal{Y}^{n}} p(y^{n}) b(y^{n})$. 

The energy constraint $E[b(Y^{n})] \geq n B$ reduces the set of admissible $p_{Y^{n}}$ and hence the set of admissible $p_{X^{n}}$.  An input vector $X^{n}$ that satisfies the constraint $E[b(Y^{n})] \geq nB$ is called $B$-admissible. Maximization is over all $B$-admissible input probability distributions.  Hence, the $n$th capacity-power function for a block of length $n$ is maximizing $I(X^{n}: Y^{n})$ over a reduced set of input and output probability distributions and is defined as,
\begin{equation}
C_n(B) = \underset{x^{n} \in \mathcal{X}^{n}\textcolor{blue}{\text{:}~}  E[b(Y^{n})] \geq n B}{\max} I(X^{n}: Y^{n}) \mbox{.}
\end{equation}
Intuitively, adding an energy constraint reduces the channel capacity, given by the optimized mutual information $I(X^{n}: Y^{n})$ of the unconstrained problem. 
The capacity-power function for the channel is defined as 
$$C(B) = \underset{n\rightarrow \infty}{\lim}\frac{1}{n} C_n(B)\mbox{.}$$

Before proceeding, let us discuss some examples of the classical capacity-power function and methods of calculation \cite{varshney2008transporting}.  If we have a classical memoryless binary noiseless channel without any energy constraint, then the capacity is $1$ and that capacity is achieved by the input distribution $P(0) = 1/2$ and $P(1) = 1/2$.

\begin{figure}
    \centering
    \includegraphics[scale = 0.13]{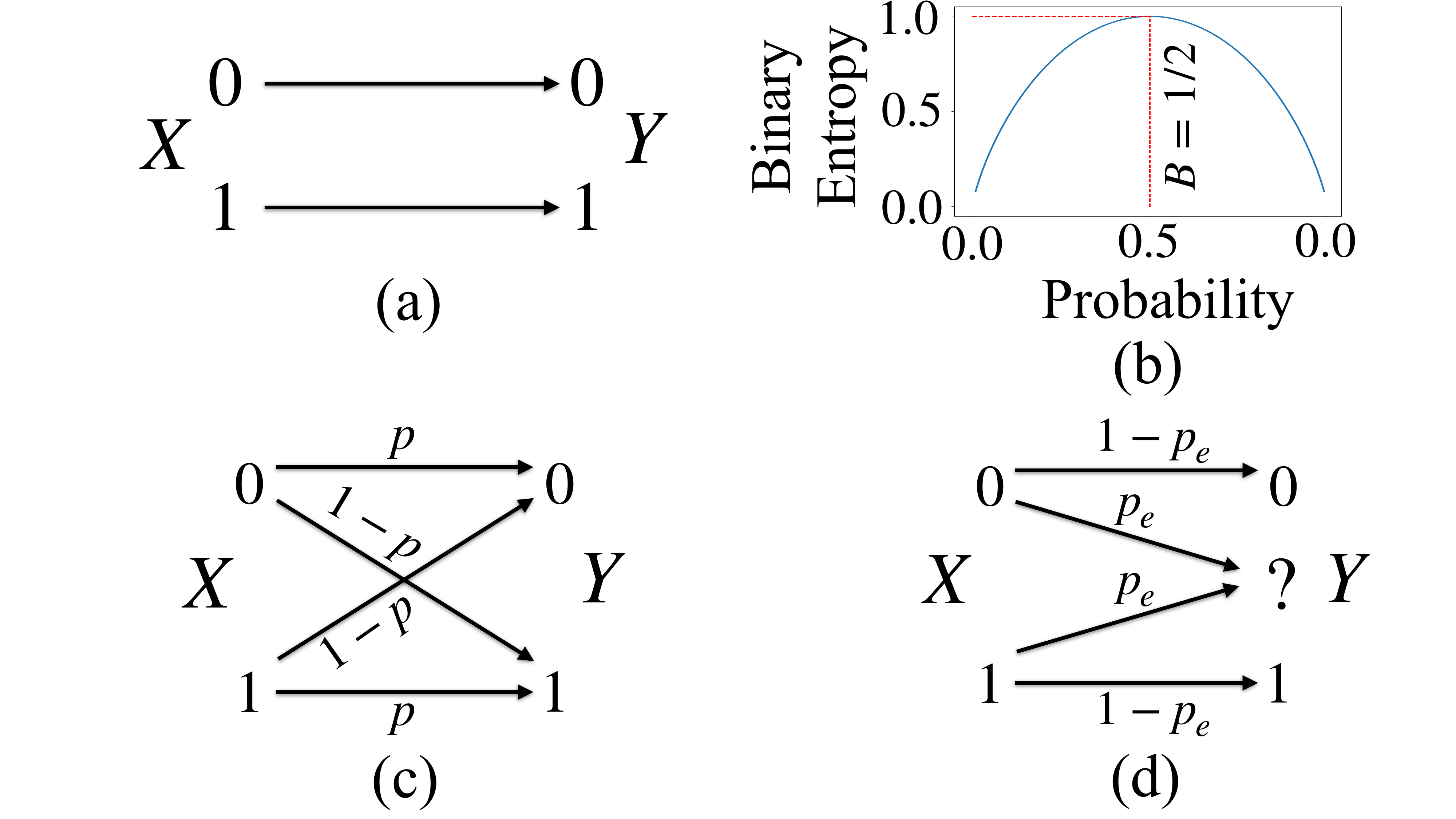}
    
    \caption{(a) Binary noiseless channel, (b) mutual information profile with respect to input distribution for noiseless channel (binary entropy function), (c) binary symmetric channel, (d) binary erasure channel }
    \label{classical_channels_all}
\end{figure}

However, if we assign energy values to the output and impose an energy constraint, then the capacity will decrease. For example, if the symbols ``0'' and ``1" carry 0 and 1 units of energy respectively, i.e.\ $b (y_1 = 0)=0$ and $b (y_2 = 1) = 1$. Then the capacity-power function is
\begin{equation*}
        C(B)=\begin{cases}
\log(2) & 0\leq B\leq\frac{1}{2}\\
{{h}(B)} & \frac{1}{2}\leq B\leq 1.
\end{cases}
    \end{equation*}
To illustrate this graphically, Fig.~\ref{classical_channels_all}(b) plots mutual information for the binary noiseless channel with respect to $P(1)$.  Clearly $\max_{P(1)} I(X;Y)$ is 1 and is achieved with $P(1)=1/2=P(0)$. The average output energy $E[b(Y)]$ corresponding to $\{P(0)=P(1)=1/2\}$ is $0\cdot\tfrac{1}{2} + 1\cdot\tfrac{1}{2} = \tfrac{1}{2}$.
 When the threshold output energy $B\leq 1/2$, the capacity-power function is just the unconstrained capacity since the capacity-achieving distribution already satisfies the minimum energy constraint. Now, let us consider beyond the region $0\leq B \leq 1/2$.  Clearly, the maximum possible value of $B$ is $1$ (when the bit string containing only ``1'' values is transmitted). So for $1/2<B\leq 1$, we are in the right half of Fig.~\ref{classical_channels_all}(b). Given a  value of $B$, it fixes $P(1) = B$, and any $P(1)\geq B$ will also satisfy the minimum energy constraint, but mutual information is maximized at $P(1) = B$ as the function is monotonically decreasing in this region.

\iffalse
\begin{figure}
    \centering
    \includegraphics[scale=.2]{}
    \caption{Binary erasure channel.}
    \label{fig:2}
\end{figure}
\fi

Similarly, it follows the capacity-power function of the classical memoryless binary erasure channel with erasure probability $p_e$ (Fig.~\ref{classical_channels_all}(d)) is:
\begin{equation*}
        C(B)=\begin{cases}
(1-p_{e})\log(2) & 0\leq B\leq\frac{1}{2}\\
(1-p_{e}){h(B)} & \frac{1}{2}\leq B\leq1
\end{cases}
\end{equation*}
and for binary symmetric channel with bit flip probability $p$ (Fig.~\ref{classical_channels_all}(c)) is:
\begin{equation*}
        C(B)=\begin{cases}
\log(2) - h(p) & 0\leq B\leq\frac{1}{2}\\
h(B) - h(p) & \frac{1}{2}\leq B\leq1, p\leq 1-B.
\end{cases}
\end{equation*}
To study quantum analogues of these settings, we must consider quantum channels and quantum states. There are two aspects of communicating through quantum channels: sending classical information through quantum channels or sending quantum information through quantum channels. Here, we discuss the capacity-power function for transporting classical information through a quantum channel. In this protocol, we have a classical source $\mathbf{X}$, which produces some classical signal; we assign a quantum state to each letter and transport it through the channel.

The protocol for sending classical information through a quantum channel encodes information in quantum states at the transmitter.  After sending through the quantum channel, the receiver performs measurement on the states to get the best possible information about the classical input distribution; this is the \emph{accessible information}. If the transmitter  prepares $\rho_{x}$, where $x=0,1 \ldots, m$ with probabilities $p_0, p_1, \ldots, p_m$. Therefore, the transmitter has the state $\rho = \sum_{x} p_x \rho_x $. The receiver makes measurements using positive operator-valued measure (POVM) elements $\{E_y\} = \{E_1, E_2, \ldots, E_k\}$. {The accessible information is determined by the Holevo bound \cite{preskilllecture} for a noiseless 
channel.}and is given by:
\begin{equation}
    I(X:Y) \leq S(\rho) - \sum_{x} p_x S(\rho_x) = \chi(\xi).
    \label{holevo_chi}
\end{equation} 
Here, the Holevo quantity $\chi$  provides an upper bound on the accessible information and depends on the input ensemble that the transmitter chooses ($\xi =\{p_x,\rho_x\}$).  In practice, the information gain is much less than the upper bound due to the difficulty in finding optimum POVMs.
 As per \eqref{holevo_chi},  the highest information rate to the receiver is $S(\rho)$; any channel noise would decrease the rate. If an ensemble of input states $\xi =\{ p_x,\rho_x\}$ is mapped by a channel to an output ensemble $\xi^{'} =\{ p_x,\mathcal{N}(\rho_x)\}$ that is measured by the receiver, then the information gain (accessible information) is upper-bounded by the Holevo quantity of the output ensemble $\chi(\xi^{'})$. The single-use classical capacity of a channel is defined by \begin{equation}
    C_{1}^{Q}(\mathcal{N}) = \max_{\xi}\: \chi(\xi^{'}),
    \label{signle use capa}
\end{equation} 
where the superscript $Q$ indicates the quantum channel.
Since this is a property of the channel, it can also be written as $\chi_{1}(\mathcal{N})$.  The channel capacity is defined as 
\begin{equation}
	C^{Q}(\mathcal{N}) = \underset{n\rightarrow \infty}{\lim} \: \frac{1}{n} {C_{n}^{Q}(\mathcal{N})}.
\end{equation}
For additive $\mathcal{N}$, this regularization is not needed.   This holds for the maximum rate at which classical information can be transmitted through a noisy quantum channel $\mathcal{N}$ if the input contains product states (i.e., entanglement is not allowed, also known as the product-state classical capacity) and the output is measured by joint measurement setting. In this setting, for the quantum noisy communication channel $\mathcal{N}$, the {channel} capacity can be expressed as 
\begin{align}
C^{Q}(\mathcal{N}) &= C_{1}^{Q}(\mathcal{N})  =\max_{\xi}\: \chi(\xi^{'}) \notag  \\
&=\max_{\text {all }\{p_x, \rho_{x}\}}\left[S\left(\mathcal{N}(\rho)\right)-\sum_{x} p_{x} S\left(\mathcal{N}(\rho_{x})\right)\right]
\end{align}
where $\rho = \sum_{x}p_{x}\rho_{x}$ is the averaged state at the input. This is a generalization of Holevo's theorem \eqref{holevo_chi} for noisy quantum channels. When we allow only inputs that are unentangled over the channel use, then by optimizing our measurement setting, we can at most get the capacity $S(\mathcal{N}(\rho))$ (for pure states), maximized over the input ensemble.

\section{Capacity-Power Function of a Quantum Channel}

In this section, we go beyond the classical capacity-power function and formulate a quantum analogue of information transmission under energy constraints. As with classical power-capacity, we require a certain threshold energy at the output. For this purpose, we
consider a Hamiltonian operator whose expected value with respect to the output quantum state gives us the desired energy constraint. 

Let the single-use capacity-power function $C^{Q}_{1}(B)$ be the maximum rate at which classical information can be transmitted through a single use of a noisy quantum channel $\mathcal{N}$ such that we have a minimum average output energy $E[b(y)] = Tr[H\mathcal{N}(\rho)]\geq B$.  In this setting, for the  noisy quantum channel $\mathcal{N}$, the  capacity-power function is:
\begin{equation}
C^{Q}_{1}(B)= \underset{\substack{\{p_{x},\rho_{x}\}: \\ Tr(H\mathcal{N}(\rho))\geq B}}{\max} \chi= \underset{\substack{\{p_{x},\rho_{x}\}: \\ Tr(H\mathcal{N}(\rho))\geq B}}{\max}\left[S\left(\mathcal{N}(\rho)\right)-\sum_{x} p_{x} S\left(\mathcal{N}(\rho_{x})\right)\right]
\label{single capa energy}
\end{equation}
where the maximization is over a set of $\{p_{i},\rho_{i}\}$ such that the output state obeys $Tr[H\mathcal{N}(\rho)] \geq B$.

The definition is very natural and intuitive. Let $F$ be the set of all input distributions $\{p_{i},\rho_{i}\}$. The idea is to get the optimal rate under the output power constraint $B$. Restricting to only $B$-admissible input distributions will reduce $F$ to a new set $\mathcal{F}$, where $\mathcal{F}\subseteq F$. To get the optimum rate we must maximize $\left[S\left(\mathcal{N}(\rho)\right)-\sum_{i} p_{i} \mathrm{~S}\left(\mathcal{N}(\rho_{i})\right)\right]$ over $\mathcal{F}$.

For a block of length $n$, we define 
\begin{align}
C^{Q}_{n}(B) =  \underset{\substack{\{p_{X^{n}},\rho_{X^{n}} \}: \\ Tr(H^{\otimes n} \mathcal{N}^{\otimes n}(\rho^{\otimes n}))\geq nB}}{\max} S\left[\mathcal{N}^{\otimes n}\left(\sum_{x_{1},x_{2}...x_{n}} p_{x_{1},x_{2},\ldots,x_{n}}\rho_{x_{1}x_{2}, \ldots,x_{n}} \right)\right] \notag \\
    - \sum_{x_{1},x_{2}, \ldots,x_{n}} p_{x_{1},x_{2}, \ldots,x_{n}}S\left[\mathcal{N}^{\otimes n}(\rho_{x_{1}x_{2}, \ldots,x_{n}} )\right]
    \label{nth capa energy}
\end{align}

The capacity-power function is then defined as:
\begin{equation}
	C^{Q}(B) = \underset{n\rightarrow \infty}{\lim} \: \frac{1}{n} C^{Q}_{n}(B) \mbox{.}
	\label{capa energy function}
\end{equation}
The Holevo quantity $\chi$ is a concave function in probabilities and a convex function in the signal states \cite{wilde_book}. {This makes the computation of power-capacity for any general quantum channel difficult. However, we can design quantum channels by fixing the input quantum states. This seems more convenient operationally as well, where the experimentalist will have access to a fixed set of quantum states
and have the freedom to change the classical probability distribution involved. Such channels are known as classical-quantum channels.}

 \section{Properties of the Capacity-Power Function}
The idea is to encode classical information in quantum states and send it through a noisy quantum channel. Then the receiver will perform measurements and try to recover the original classical information. The single-use capacity-power function  $C_{1}^{Q}(B)$ is clearly non-increasing in $B$ as larger $B$ further restricts the feasible set.

Consider the input $\xi =\{ p_x, \rho_x\}$.
	When we fix the signal states $\rho_x$, and have the freedom to change the probabilities $p_x$, such a configuration is called a classical-quantum (CQ) channel. 
\begin{theorem}
		  
   $C_{1}^{Q}(B)$ is a \textbf{concave} function of $B$ for a CQ channel 
\begin{equation}
        C_{1}^{Q}(\alpha_{1}B_{1} + \alpha_{2}B_{2}) \geq \alpha_{1}C_{1}^{Q}(B_{1}) + \alpha_{2}C_{1}^{Q}(B_{2}) \mbox{.}
    \end{equation}
 \label{th1}   
\end{theorem}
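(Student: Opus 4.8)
The plan is to exploit the fact that, for a CQ channel, the output energy is an \emph{affine} functional of the probability vector while the Holevo quantity is \emph{concave} in that same vector. Concavity of $C_{1}^{Q}(B)$ then follows from the general principle that the optimal value of a concave objective, maximized subject to a single linear inequality constraint parametrized by the level $B$, is concave in $B$. Concretely, fix the signal states $\{\rho_{x}\}$ and write $\rho(p)=\sum_{x}p_{x}\rho_{x}$ for a probability vector $p=(p_{x})$; the admissible set for threshold $B$ is $\mathcal{F}_{B}=\{p:\sum_{x}p_{x}\,\mathrm{Tr}[H\mathcal{N}(\rho_{x})]\ge B\}$ intersected with the simplex.

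First I would pick optimizers. Since the simplex is compact and the energy constraint is closed, $C_{1}^{Q}(B_{j})$ is attained at some $p^{(j)}\in\mathcal{F}_{B_{j}}$ for $j=1,2$ (assuming $B_{1},B_{2}$ lie in the feasible range, i.e.\ below $\max_{x}\mathrm{Tr}[H\mathcal{N}(\rho_{x})]$). Next, form the convex combination $p=\alpha_{1}p^{(1)}+\alpha_{2}p^{(2)}$ with $\alpha_{1}+\alpha_{2}=1$ and $\alpha_{j}\ge0$, which is again a valid probability vector. The crucial step is feasibility: because
\begin{equation}
\mathrm{Tr}[H\mathcal{N}(\rho(p))]=\sum_{x}p_{x}\,\mathrm{Tr}[H\mathcal{N}(\rho_{x})]
\end{equation}
is linear in $p$, the combined distribution obeys $\mathrm{Tr}[H\mathcal{N}(\rho(p))]\ge\alpha_{1}B_{1}+\alpha_{2}B_{2}$, so $p\in\mathcal{F}_{\alpha_{1}B_{1}+\alpha_{2}B_{2}}$.

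Finally, I would invoke concavity of the Holevo quantity in the probabilities (stated above): since the states are fixed, $p\mapsto\chi=S(\mathcal{N}(\rho(p)))-\sum_{x}p_{x}S(\mathcal{N}(\rho_{x}))$ is concave, the first term being $S$ (concave) composed with a linear map and the subtracted term being linear in $p$. Hence $\chi(p)\ge\alpha_{1}\chi(p^{(1)})+\alpha_{2}\chi(p^{(2)})=\alpha_{1}C_{1}^{Q}(B_{1})+\alpha_{2}C_{1}^{Q}(B_{2})$, and because $p$ is admissible for the averaged threshold, $C_{1}^{Q}(\alpha_{1}B_{1}+\alpha_{2}B_{2})\ge\chi(p)$, which chains to give the claim.

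The main thing to get right, and the only place the CQ restriction is genuinely used, is the interplay of the two facts: the energy functional must be affine in $p$ (true here since the output energy of a mixture is the mixture of the output energies), and $\chi$ must be concave in $p$ (true because the states are fixed, so one avoids the convexity of $\chi$ in the \emph{states} that would otherwise obstruct the averaging). For a general channel, where one also optimizes over the $\rho_{x}$, this feasibility-plus-concavity argument breaks down, which is consistent with the paper obtaining only piecewise concavity in that setting.
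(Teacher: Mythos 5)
Your proposal is correct and follows essentially the same route as the paper's own proof: mix the two optimal input distributions, use linearity of the output-energy functional in $p$ (valid because the signal states are fixed) to establish admissibility at the averaged threshold, and then apply concavity of the Holevo quantity in the probabilities to lower-bound the objective. Your write-up is in fact slightly more careful than the paper's, since you explicitly note attainment of the optima on the compact feasible set and spell out why $\chi$ is concave in $p$.
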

\begin{proof}
Let us take two classical information sources ($X^{1},X^{2}$) distributed according to $P^{1},P^{2}$. Suppose $\rho^{1} = \{p_{x}^{1},\rho_{x}^{1}\}$ achieves $C_{1}^{Q}(B_{1})$ and $\rho^{2} = \{p^{2}_{x},\rho^{2}_{x}\}$ achieves $C_{1}^{Q}(B_{2})$. These states are mapped to output states after applying the channel $\mathcal{N}$. 
$$\rho^{1} / \rho^{2} \xrightarrow[\text{}]{\mathcal{N}} \sigma^{1} / \sigma^{2}\mbox{,}$$
    where $\sigma^{k} = \sum p^{k}_{x}\mathcal{N}(\rho_{x}^{k})$ for $k =1,2$. Then from from  \eqref{single capa energy},
    \begin{equation}
        \left[\mathrm{~S}\left(\sigma^{k}\right)-\sum_{x} p^{k}_{x} \mathrm{~S}\left(\sigma_{x}^{k}\right)\right] = C_{1}^{Q}(B_{k}) \mbox{.}
        \label{capacity energy individual}
    \end{equation}
    From our initial assumption,  the output states satisfy 
    \begin{equation}
    Tr[H\sigma^{k}] \geq B_{k}
    \label{energy bound on individual states}
\end{equation}     for $k = 1,2$. 
Now let us take another source $X$ distributed according to the distribution $P = \sum_{k} \alpha_{k}P^{k}$. Therefore, the prepared density matrix will be $\rho = \sum_{x}p_{x}\rho_{x}$ where $p_{x} = \alpha_{1}p^{1}_{x} + \alpha_{2}p^{2}_{x}$.

	As the signal states are fixed for a CQ channel
	 i.e.\ $\rho^{1}_{x} = \rho^{2}_{x} = \rho_{x} \xrightarrow[\text{}]{\mathcal{N}} \sigma_{x}$ for all $x$. Therefore the average output energy will be, from inequality \eqref{energy bound on individual states},
	
    \begin{align*}
        E[b(Y)] = Tr[H\mathcal{N}(\rho)] 
        =\sum_{k=1,2} \alpha_{k}Tr[H\sigma^{k}] \\
        \geq \sum_{k=1,2} \alpha_{k}B_{k}\mbox{.}
    \end{align*}
    So the ensemble $\{p_{x},\rho_{x}\}$ is $\sum_{k=1,2} \alpha_{k}B_{k}$-admissible and by definition \begin{equation}
        \left[S\left(\sigma\right)-\sum_{x} p_{x} S\left(\sigma_{x}\right)\right] \leq C_{1}^{Q}\left(\sum_{k=1,2} \alpha_{k}B_{k}\right)\mbox{,}
        \label{ineq}
    \end{equation} 
    where $\sigma = \sum_{x} p_x \sigma_x$
    Now the left side of this inequality is also a concave function of input probabilities ($\{p_{x}\}$), which implies \begin{align*}
        \left[S\left(\sigma\right)-\sum_{x} p_{x} S\left(\sigma_{x}\right)\right] \geq \alpha_{1}\left[S\left(\sigma^{1}\right)-\sum_{x} p^{1}_{x} S\left(\sigma^{1}_{x}\right)\right] \nonumber\\ 
        + \alpha_{2}\left[S\left(\sigma^{2}\right)-\sum_{x} p^{2}_{x} S\left(\sigma^{2}_{x}\right)\right].
    \end{align*}
    From \eqref{capacity energy individual},
    \begin{equation}
        \left[S\left(\sigma\right)-\sum_{x} p_{x} S\left(\sigma_{x}\right)\right] \geq \alpha_{1}C_{1}^{Q}(B_{1}) + \alpha_{2}C_{1}^{Q}(B_{2}).
        \label{ineq2}
    \end{equation}
    We get our desired result by combining inequality \eqref{ineq} with inequality \eqref{ineq2}.
    \begin{equation*}
        C_{1}^{Q}(\alpha_{1}B_{1} + \alpha_{2}B_{2}) \geq \alpha_{1}C_{1}^{Q}(B_{1}) + \alpha_{2}C_{1}^{Q}(B_{2}).
    \end{equation*}
\end{proof}

As discussed previously, the theorem holds for a CQ channel. In general, the set of states maximizing the channel capacity for two different values of threshold energy will be different. Therefore, the concavity property of power-capacity functions does not hold in general (see Fig.~\ref{verification of non concavity}).

\begin{figure}
\centering
\begin{subfigure}
    \centering
    \includegraphics[scale = 0.3]{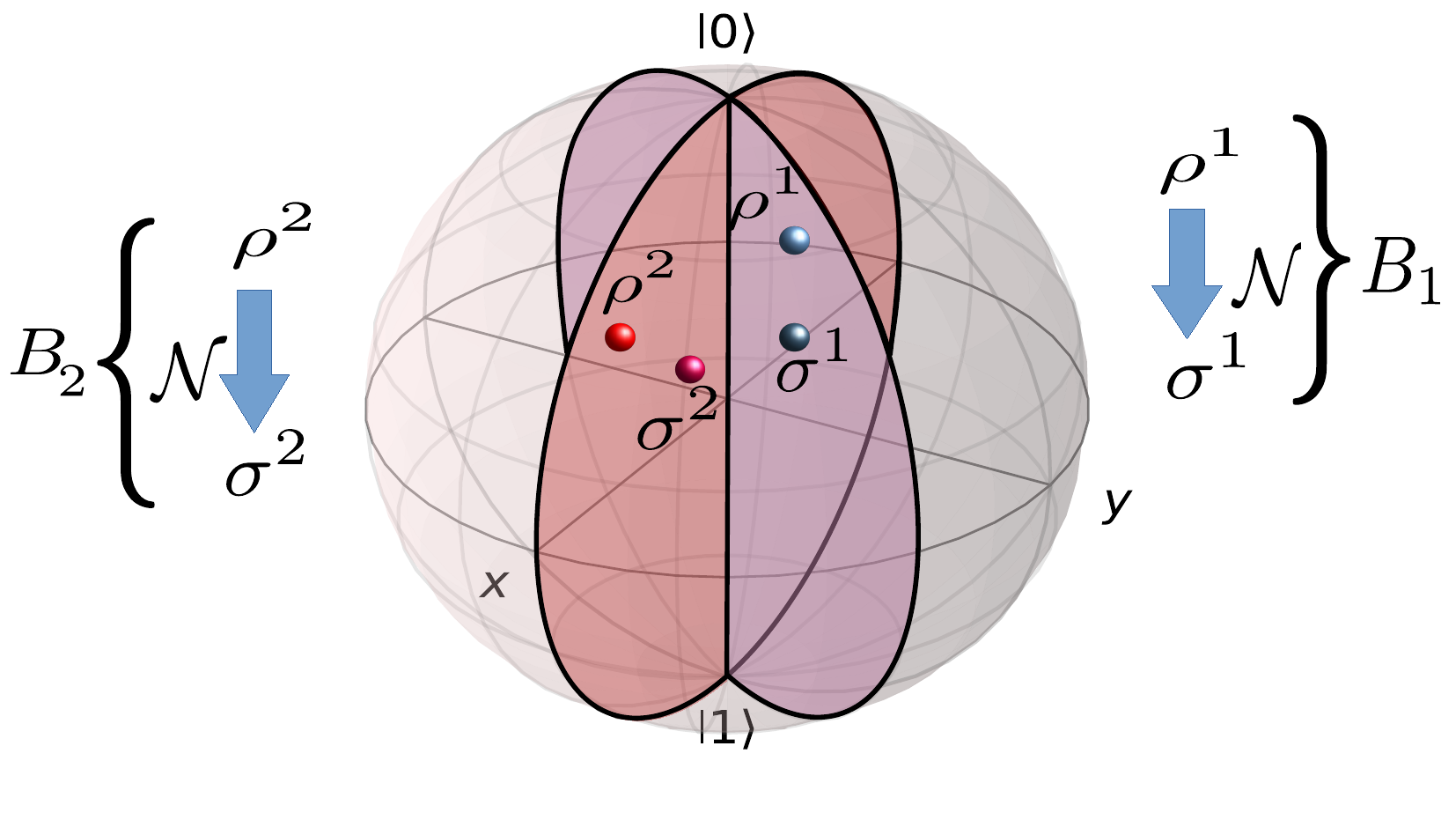}
        
    \caption{Visual representation of the states inside the Bloch sphere: $\rho^{1}$ and $\rho^{2}$ represent the input states which achieve capacity-power functions $C_{1}^{Q}(B_1)$ and $C_{1}^{Q}(B_2)$, respectively. Output of the channel is represented by $\sigma^{1}$ and $\sigma^{2}$, which, in general, lie in two different vertical planes.}
    \label{capa_en_bloch}
\end{subfigure}
\begin{subfigure}
    \centering
    \includegraphics[scale = 0.3]{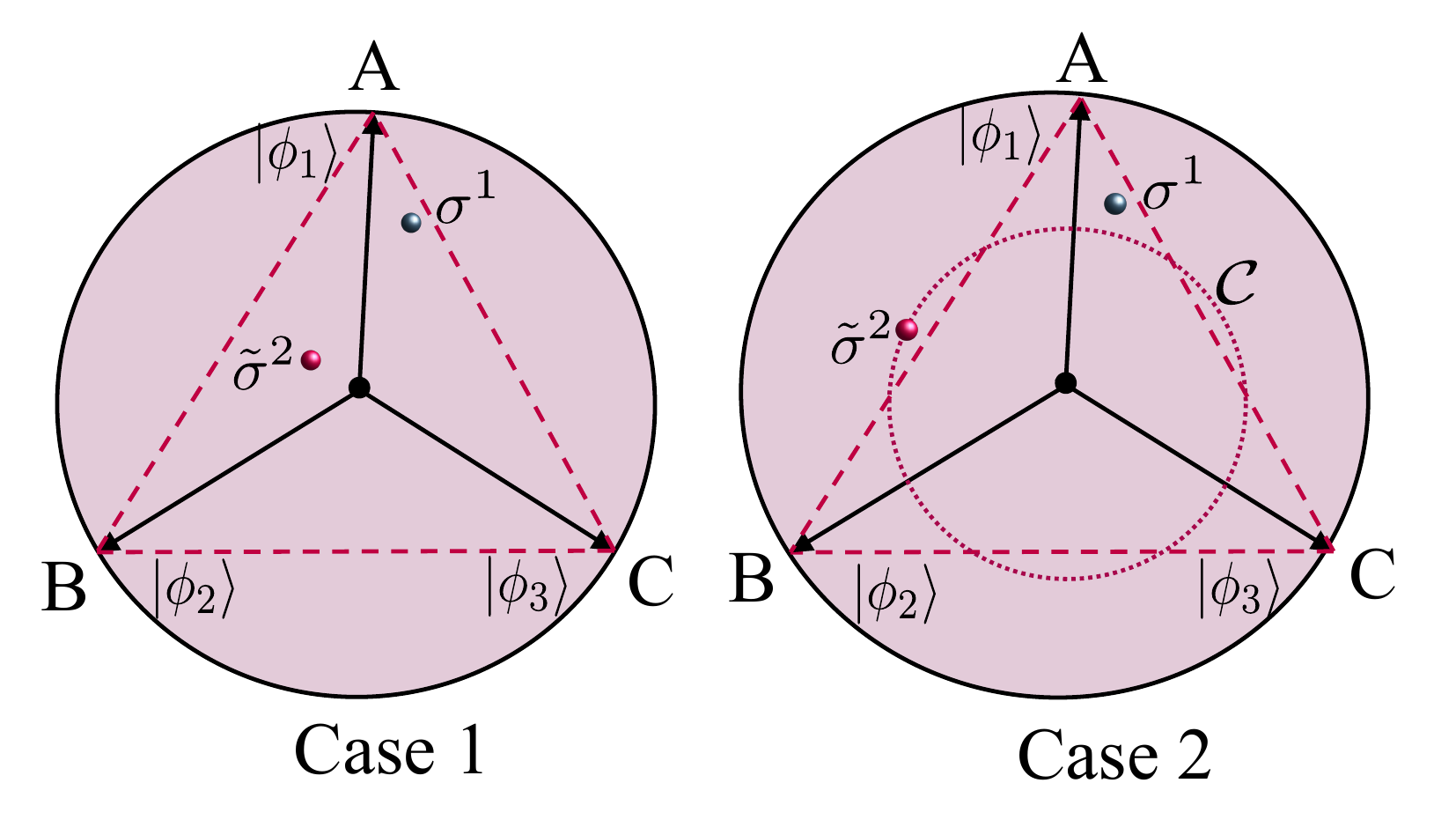}
   
    \caption{The plane where the output corresponding to energy $B_1$, $\sigma^1$ lies. Case 1: The state corresponding to energy $B_2$, after unitary transformation to a state on the same plane as $\sigma^1$, $\tilde{\sigma}^2$ lies outside of the triangle $ABC$. Case 2: The state corresponding to energy $B_2$, after unitary transformation to a state on the same plane as $\sigma^1$, $\sigma^2$ lies inside of the triangle $ABC$.}
    \label{capa_en_bloch_2}
\end{subfigure}
\end{figure}
\begin{figure}
    \centering
    \includegraphics[width=8.5cm, height=5.3cm]{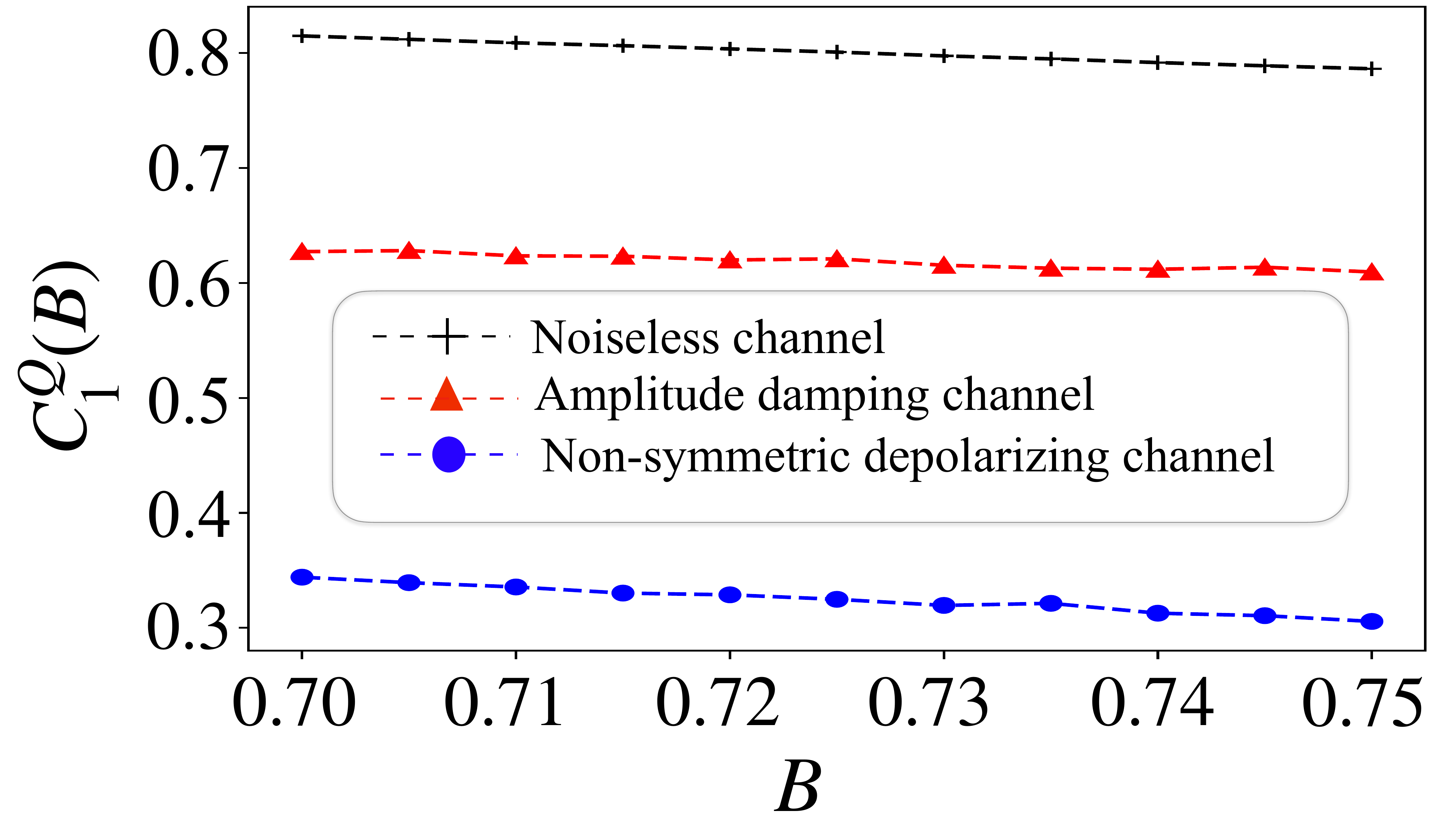}
    
    \caption{ Capacity-power function for three different channels when maximized over input states as well as probabilities. We can clearly see the function is not concave in general except for the noiseless channel (the black curve).}
    \label{verification of non concavity}
\end{figure}
\begin{corollary}
    $C_{1}^{Q}(B)$ is a \textbf{concave} function of $B$ for noiseless (or unitary) channel when maximized over pure states and probabilities as long as the set of signal states span the Hilbert space.
    \label{coro 1}
\end{corollary}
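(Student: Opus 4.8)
The plan is to follow the template of Theorem~\ref{th1}, but to replace the ``fixed signal states'' ingredient, which is unavailable once we are free to optimize over states, by the observation that a noiseless channel acting on pure states makes the Holevo quantity collapse to a single von Neumann entropy. Concretely, if every $\rho_x$ is pure then $S(\rho_x)=0$, so for $\mathcal{N}=\mathrm{id}$,
\begin{equation*}
\chi(\xi')=S(\rho)-\sum_x p_x S(\rho_x)=S(\rho),\qquad \rho=\sum_x p_x\rho_x,
\end{equation*}
and the entire objective depends on the ensemble only through its average state $\rho$. For a unitary channel $\mathcal{N}(\cdot)=U(\cdot)U^\dagger$ the same reduction holds after replacing $H$ by $\tilde H=U^\dagger H U$, since $S(U\rho U^\dagger)=S(\rho)$ and $\mathrm{Tr}[HU\rho U^\dagger]=\mathrm{Tr}[\tilde H\rho]$; so it suffices to treat the noiseless case.

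Next I would take pure-state ensembles $\xi^1=\{p^1_x,\ket{\psi^1_x}\}$ and $\xi^2=\{p^2_y,\ket{\psi^2_y}\}$ whose signal states span the Hilbert space and which achieve $C_1^Q(B_1)$ and $C_1^Q(B_2)$, with averages $\rho^1,\rho^2$ satisfying $S(\rho^k)=C_1^Q(B_k)$ and $\mathrm{Tr}[H\rho^k]\ge B_k$. For $\alpha_1+\alpha_2=1$, $\alpha_k\ge 0$, I would form the \emph{combined} ensemble consisting of $\{\alpha_1 p^1_x,\ket{\psi^1_x}\}$ together with $\{\alpha_2 p^2_y,\ket{\psi^2_y}\}$. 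This is again a pure-state ensemble, its signal states still span the Hilbert space (a union of two spanning families is spanning), and its average state is exactly $\rho=\alpha_1\rho^1+\alpha_2\rho^2$. By linearity of the trace,
\begin{equation*}
\mathrm{Tr}[H\rho]=\alpha_1\mathrm{Tr}[H\rho^1]+\alpha_2\mathrm{Tr}[H\rho^2]\ge\alpha_1 B_1+\alpha_2 B_2,
\end{equation*}
so the combined ensemble is $(\alpha_1 B_1+\alpha_2 B_2)$-admissible and hence $C_1^Q(\alpha_1 B_1+\alpha_2 B_2)\ge S(\rho)$ by definition. Concavity of the von Neumann entropy then gives $S(\rho)\ge\alpha_1 S(\rho^1)+\alpha_2 S(\rho^2)=\alpha_1 C_1^Q(B_1)+\alpha_2 C_1^Q(B_2)$, and chaining the two inequalities yields the claimed concavity.

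The main obstacle, compared with Theorem~\ref{th1}, is conceptual rather than computational: there the signal states were fixed, so mixing two admissible ensembles produced an ensemble with the same states and one could invoke concavity of $\chi$ in the probabilities; here the two optimizers generally use different states, so that route is closed. The pure-state/noiseless reduction is what rescues the argument, since it lets the objective be read off from the average state alone and lets me appeal to concavity of $S$ in $\rho$ directly. The role of the spanning hypothesis is precisely to keep the combined ensemble inside the admissible class, which I would verify is automatic because the union of two spanning families again spans the Hilbert space.
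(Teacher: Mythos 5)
Your proof is correct, but it takes a genuinely different route from the paper's. The paper argues geometrically on the Bloch sphere: it takes the two optimizing output ensembles $\sigma^1,\sigma^2$, uses unitary invariance of the Holevo quantity to rotate $\sigma^2$ into the plane (and, if necessary, into the triangle) spanned by the signal states achieving $C_1^Q(B_1)$, concludes that $C_1^Q(B_2)$ can therefore be achieved with the \emph{same} signal states by changing only the probabilities, and then invokes the CQ-channel concavity of Theorem~\ref{th1}. You instead exploit the collapse $\chi = S(\rho)$ for pure-state ensembles under a noiseless (or, after conjugating $H$ by $U$, unitary) channel, so that both the objective and the constraint depend on the ensemble only through its average state; mixing the two optimizing ensembles, linearity of the trace, and concavity of the von Neumann entropy then give concavity directly, with no appeal to Theorem~\ref{th1}. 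Your route buys several things: it is dimension-independent (the paper's picture is essentially a qubit visualization), it keeps the energy bookkeeping explicit --- a point the paper's argument leaves delicate, since the unitary rotations it performs preserve the Holevo quantity but not, in general, $\mathrm{Tr}[H\sigma]$, so admissibility after rotation is not verified there --- and it makes transparent why the spanning hypothesis is harmless (a union of spanning families spans). What the paper's route buys is geometric intuition and a uniform reduction to the fixed-signal-state machinery of Theorem~\ref{th1}. One small point common to both arguments: you assume the suprema are attained; if they are not, run your argument with $\epsilon$-approximate optimizers and let $\epsilon \to 0$.
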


\begin{proof}
 Let us take two different input ensembles $\rho^1$ and $\rho^2$, transforming to $\sigma^1$ and $\sigma^2$ by a noiseless or a unitary channel, which achieves $C_{1}^{Q}(B_1)$ and $C_{1}^{Q}(B_2)$. In general, $\sigma^1$ and $\sigma^2$ will lie on two different vertical planes as shown by two different colors in Fig.~\ref{capa_en_bloch}. We can always transform the state $\sigma^2$ unitarily to a state $\tilde{\sigma}^2$ on the surface where $\sigma^1$ lies. The Holevo quantity is same for $\sigma^2$ and $\tilde{\sigma}^2$ which is $C_{1}^{Q}(B_2)$. In Fig.~\ref{capa_en_bloch_2} we can see $\sigma^1$ can be written as some convex combination of states $\{\ket{\phi_1},\ket{\phi_2},\ket{\phi_3}\}$. In fact, we can access any point within the triangle $ABC$ by some convex combination. Now, we can classify two different scenarios based on the position of $\tilde{\sigma}^2$, whether it is inside the triangle $ABC$ or not. 
 
Let us denote the first scenario as Case 1 and the second as Case 2 (see Fig.~\ref{capa_en_bloch_2}).
Case 1 is straightforward as we can reach $\tilde{\sigma}^2$ from $\sigma^1$ by only changing the probabilities. For Case 2, we can again transform the $\tilde{\sigma}^2$ along the circle $\mathcal{C}$ to reach inside the triangle $ABC$. Notice that this transformation is unitary and does not change the Holevo quantity of the ensemble. Once we reach inside, then we can follow the same procedure as in Case 1. 

 Therefore, there exists a distribution that achieves $C_1^{Q}(B_2)$ from the same set of input signal states which achieves $C_1^{Q}(B_1)$ by only changing the probabilities. We have already shown that the capacity-power function is concave under this scenario. We can now safely say $C_1^{Q}(B)$ is a concave function for any noiseless (or unitary) channel. 
\end{proof}

\begin{theorem}
    $C_{1}^{Q}(B)$ is a piecewise concave function for any noisy channel when maximized over both states and probabilities.
    \label{piece_wise_concavity}
\end{theorem}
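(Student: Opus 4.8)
The plan is to realize $C_1^Q(B)$ as the upper envelope of a family of functions, each of which is concave by Theorem~\ref{th1}, and then to argue geometrically that this envelope is assembled from finitely many concave branches. First I would fix a candidate set of signal states $S=\{\rho_x\}$ and define the restricted capacity-power function $C_1^{Q,S}(B)$ by maximizing the Holevo quantity over the probabilities $\{p_x\}$ alone, subject to $Tr[H\mathcal{N}(\rho)]\geq B$. Fixing the signal states is exactly the CQ situation of Theorem~\ref{th1}, so each $C_1^{Q,S}(B)$ is concave in $B$. Because the full optimization in \eqref{single capa energy} additionally ranges over the choice of signal states, we have
\begin{equation*}
C_1^Q(B)=\sup_{S}\,C_1^{Q,S}(B),
\end{equation*}
so $C_1^Q$ is a pointwise supremum of concave functions; such an envelope is in general neither concave nor convex, which is consistent with the loss of global concavity already noted after Theorem~\ref{th1}.

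Next I would pass to the output (Bloch) picture of Corollary~\ref{coro 1}. Since $\sigma=\sum_x p_x\,\mathcal{N}(\rho_x)=\mathcal{N}(\rho)$, the admissible average output states fill out a convex set $\mathcal{R}$ (the image of the channel), and because $Tr[H\sigma]$ is affine in $\sigma$ the energy constraint is a half-space. Defining the convex-roof-type quantity $\hat\chi(\sigma)=\max\{S(\sigma)-\sum_x p_x S(\mathcal{N}(\rho_x))\}$ over all decompositions $\sigma=\sum_x p_x\mathcal{N}(\rho_x)$, the optimization collapses to
\begin{equation*}
C_1^Q(B)=\max_{\sigma\in\mathcal{R}:\,Tr[H\sigma]\geq B}\hat\chi(\sigma).
\end{equation*}
For a noiseless channel one has $\hat\chi(\sigma)=S(\sigma)$, which is concave, and the value function is globally concave --- this is precisely Corollary~\ref{coro 1}. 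For a noisy channel the sole obstruction is that $\hat\chi$ need not be concave on $\mathcal{R}$.

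The geometric heart of the argument is then to track the optimizer as $B$ grows. For small $B$ the energy constraint is inactive and $C_1^Q(B)$ equals the unconstrained single-use capacity, hence is trivially concave; once the constraint bites, the optimal $\sigma^{*}(B)$ sits on the sweeping hyperplane $Tr[H\sigma]=B$ and moves along the boundary of the shrinking feasible region. On any maximal $B$-interval over which the optimal ensemble keeps the same active signal-state set, the optimizer stays in the convex hull of a fixed set of output states (the triangle $ABC$ in Fig.~\ref{capa_en_bloch_2}), so the situation reduces to a single concave branch $C_1^{Q,S}$ and $C_1^Q$ is concave there by Theorem~\ref{th1}. The breaks between pieces are exactly the energies at which this active set jumps --- the analogue of the passage from Case~1 to Case~2 in Corollary~\ref{coro 1}, except that for a noisy channel $\mathcal{R}$ is a proper subset of the Bloch ball and the unitary-transport trick that rescued global concavity there is no longer available.

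The main obstacle is controlling these transitions: I must show that the optimal signal-state configuration changes at only finitely many values of $B$, so that $[0,B_{\max}]$ decomposes into finitely many intervals each carrying a single concave branch, rather than the supremum degenerating into a convex stretch over some range. I would attack this with a continuity/compactness argument --- the optimizer $B\mapsto\sigma^{*}(B)$ and its supporting states vary continuously except at isolated critical energies where the active face of $\mathcal{R}$ or the dimension of the optimal ensemble's support changes --- and corroborate finiteness with the numerical examples (beam-splitting, depolarizing, and amplitude-damping channels) of Fig.~\ref{verification of non concavity}, where only a few concave pieces appear. Establishing concavity on each piece is handled cleanly by Theorem~\ref{th1}; making the finiteness of the pieces rigorous for an arbitrary channel is the crux, and it is here that the visual argument is supplemented by the simulations.
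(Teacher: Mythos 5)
Your proposal takes essentially the same route as the paper: realize $C_1^{Q}(B)$ as the upper envelope of the concave branches $C_1^{Q,S}(B)$ obtained by fixing the signal states (each concave by Theorem~\ref{th1}), and observe that this envelope follows one concave branch between consecutive crossing points, hence is piecewise concave. The finiteness-of-pieces issue you flag as the crux is not resolved in the paper either --- its ``visual proof'' simply exhibits three branches and their crossings (Fig.~\ref{graphical_proof_of_piecewise_concavity}) and leans on the same numerical evidence, so your argument is at least as complete as the paper's own.
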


\begin{proof}
For a given set of input signal states, the capacity-power function is concave (Theorem \ref{th1}).  Let us take three such sets of input signal states: $\mathcal{S}_1 = \{\rho_{1x} \}$ , $\mathcal{S}_2 = \{\rho_{2x} \}$ , $\mathcal{S}_3 = \{\rho_{3x} \}$. Between any two energy values $B_1$ and $B_2$, we will get three curves for these three sets of signal states, which are denoted by 1, 2, 3, respectively, in Fig.~\ref{graphical_proof_of_piecewise_concavity}(a). We can clearly see if we also maximize over these three sets of states, then the function will look like Fig.~\ref{graphical_proof_of_piecewise_concavity}(b). It will follow curve 1 till the point \emph{a} (crossing point between 1 and 2), then it will follow curve 2 between points \emph{a} and \emph{b} (crossing point between 2 and 3), and at the end, it will follow curve 3. In each interval, it is concave but not in the full domain. 
\end{proof}
\begin{figure}
        \centering
        \includegraphics[width=9cm, height=5cm]{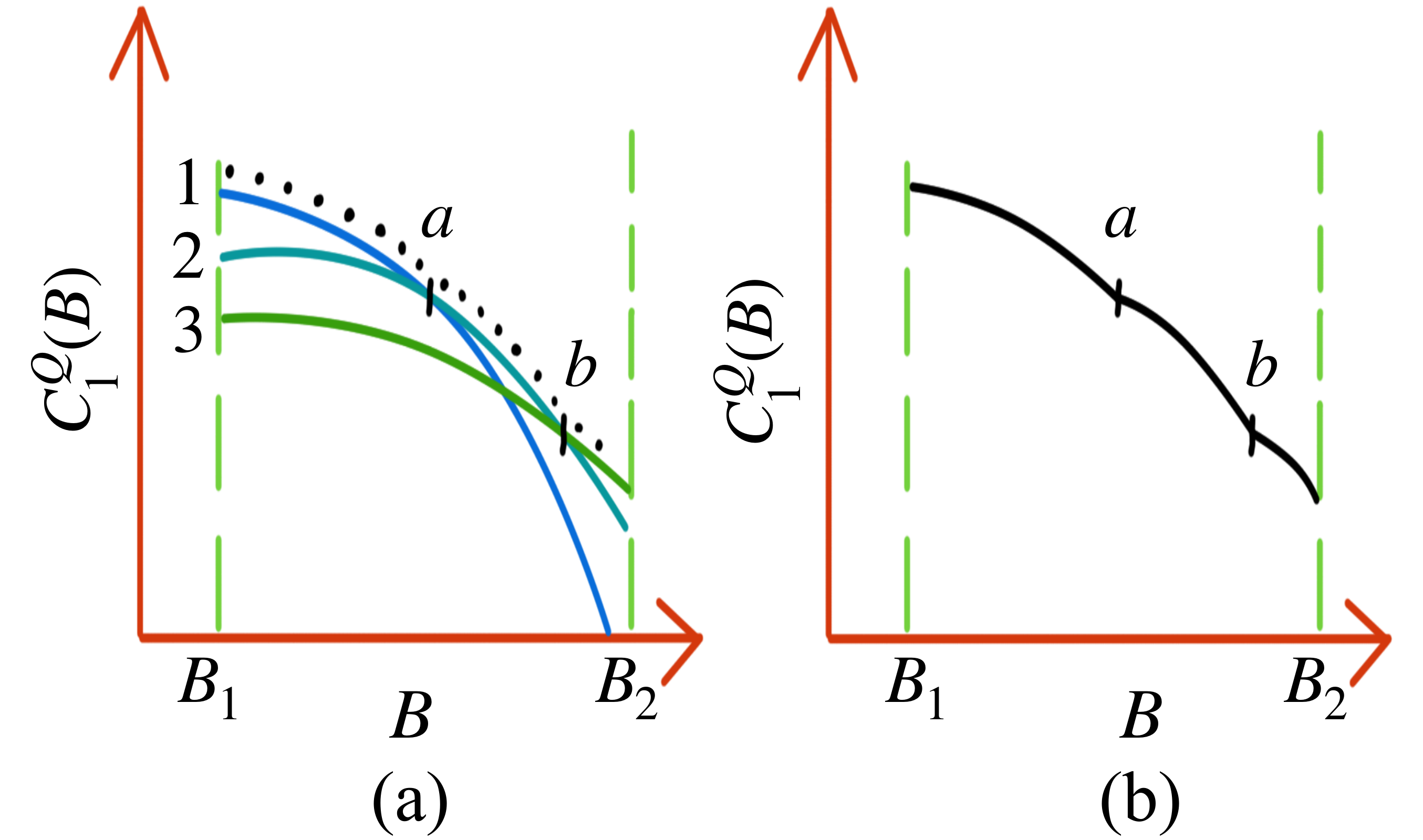}
      
      \caption{Visual proof of Theorem \ref{piece_wise_concavity}: (a) Three curves of capacity-power function corresponding to three different sets of input signal states, $\mathcal{S}_1 = \{\rho_{1x} \}$, $\mathcal{S}_2 = \{\rho_{2x} \}$, $\mathcal{S}_3 = \{\rho_{3x} \}$ denoted by 1, 2, 3, respectively. The maximum of capacity-power function over all these sets of input signal states, shown in (b), is piecewise concave.}
        \label{graphical_proof_of_piecewise_concavity}
    \end{figure}

Therefore, if we maximize over a large number of states, the distance between each crossing point will also decrease, eventually making the function almost concave.  Fig.~\ref{piecewise_concavity} illustrates this effect and emergence of an almost concave function upon increasing the number of states in our optimization protocol.

\begin{figure}
    \centering
    \includegraphics[width=8.5cm, height=5.5cm]{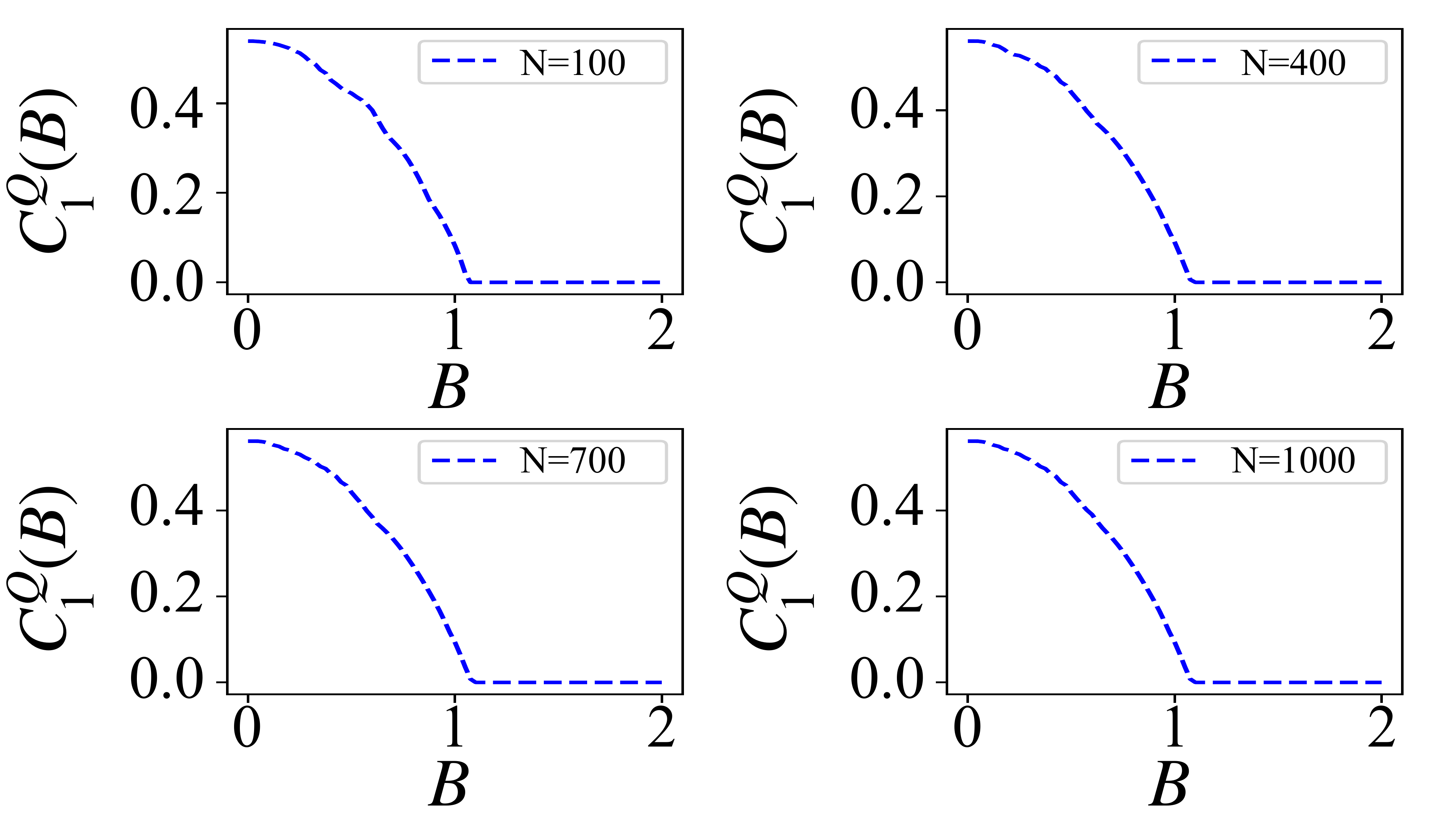}
    
    \caption{Capacity-power function for non-symmetric depolarizing channel maximized over $N$ number of states: we can see clearly the emergence of an ``almost" concave function as we increase $N$.}
    \label{piecewise_concavity}
\end{figure}

\begin{theorem}
     If we do not allow \textbf{entanglement or any classical correlation} between the inputs across the channel uses, i.e. the input ensemble is of the form \{$p_{x_{1}},\rho_{x_{1}}\} \otimes \{p_{x_{2}},\rho_{x_{2}}\} \ldots \otimes \{p_{x_{n}},\rho_{x_{n}}$\} for any CQ channel $\mathcal{N}^{\otimes n}$, where the signal states are fixed across the channel use and for any channel where capacity-power function is concave, the capacity-power function is additive, i.e.\ \begin{equation}
     C_{n}^{Q}(B) = nC_1^{Q}(B).
     \label{addivity}
 \end{equation}  
 \label{th2}
\end{theorem}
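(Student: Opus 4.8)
The plan is to prove the two inequalities $C_n^{Q}(B)\ge nC_1^{Q}(B)$ and $C_n^{Q}(B)\le nC_1^{Q}(B)$ separately. The whole argument rests on one structural observation that follows directly from the no-correlation assumption: for an uncorrelated product ensemble $\{p_{x_1},\rho_{x_1}\}\otimes\cdots\otimes\{p_{x_n},\rho_{x_n}\}$ fed into $\mathcal{N}^{\otimes n}$, the averaged output factorizes as $\sigma=\bigotimes_{i=1}^{n}\sigma^{(i)}$ with $\sigma^{(i)}=\sum_{x_i}p_{x_i}\mathcal{N}(\rho_{x_i})$, while each individual signal output factorizes as $\mathcal{N}(\rho_{x_1})\otimes\cdots\otimes\mathcal{N}(\rho_{x_n})$. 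Since the von Neumann entropy is additive on tensor products, both terms of the Holevo quantity split copy-by-copy, yielding the exact identity $\chi=\sum_{i=1}^{n}\chi_i$, where $\chi_i=S(\sigma^{(i)})-\sum_{x_i}p_{x_i}S(\mathcal{N}(\rho_{x_i}))$ is the single-use Holevo quantity of copy $i$. Under the additive-energy convention (the total Hamiltonian acts as a sum of single-copy terms), the output energy likewise decomposes as $\sum_{i=1}^{n}Tr(H\sigma^{(i)})=\sum_{i=1}^{n}B_i$ with $B_i:=Tr(H\sigma^{(i)})$.

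First I would establish the easy direction $C_n^{Q}(B)\ge nC_1^{Q}(B)$. Let $\{p_x,\rho_x\}$ be a single-use ensemble achieving $C_1^{Q}(B)$, with output energy $Tr(H\sigma)\ge B$. Using this ensemble independently on all $n$ copies produces a valid uncorrelated product input whose total output energy is $n\,Tr(H\sigma)\ge nB$, so it is $nB$-admissible; by the additivity identity its Holevo quantity equals $n\chi=nC_1^{Q}(B)$, and since $C_n^{Q}(B)$ is a maximum over all admissible product ensembles, the inequality follows.

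For the converse $C_n^{Q}(B)\le nC_1^{Q}(B)$, I would take an arbitrary $nB$-admissible product ensemble and write its Holevo quantity as $\sum_i\chi_i$ using the factorization above. Each copy $i$ is, on its own, a single-use ensemble whose output carries energy $B_i$, hence $\chi_i\le C_1^{Q}(B_i)$ by the definition \eqref{single capa energy} of the single-use capacity-power function. Summing gives $\chi\le\sum_i C_1^{Q}(B_i)$. Now I invoke the concavity of $C_1^{Q}$ proved in Theorem~\ref{th1}: by Jensen's inequality $\tfrac1n\sum_i C_1^{Q}(B_i)\le C_1^{Q}\!\big(\tfrac1n\sum_i B_i\big)$, and since admissibility forces $\tfrac1n\sum_i B_i\ge B$ while $C_1^{Q}$ is non-increasing, $C_1^{Q}\!\big(\tfrac1n\sum_i B_i\big)\le C_1^{Q}(B)$. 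Chaining these bounds yields $\chi\le nC_1^{Q}(B)$, and maximizing over admissible ensembles gives the converse.

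The main obstacle is precisely this converse: the individual copies need not each be $B$-admissible, because an optimal product strategy may overload some copies with energy and starve others, so a naive copy-by-copy comparison against $C_1^{Q}(B)$ would fail. The role of Theorem~\ref{th1} is exactly to resolve this—concavity trades the average constraint $\sum_i B_i\ge nB$ for the uniform bound $nC_1^{Q}(B)$, with monotonicity absorbing any energy surplus. I would also state the additive-energy convention explicitly, since the decomposition of the energy into $\sum_i B_i$ is what makes the concavity step applicable. Finally, the more general clause ``any channel where the capacity-power function is concave'' follows by the identical argument: the only channel-specific ingredient used is the concavity hypothesis, everything else being a consequence of the product structure.
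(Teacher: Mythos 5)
Your proof is correct and follows essentially the same route as the paper's: both exploit the copy-by-copy factorization of the Holevo quantity and the output energy for uncorrelated product ensembles, obtain the lower bound by repeating the optimal single-use ensemble, and obtain the converse by bounding each copy's contribution by $C_1^{Q}(B_i)$ and then invoking concavity (Theorem~\ref{th1}) together with monotonicity to absorb the unequal energy split. Your version is slightly cleaner in that it treats general $n$ directly via Jensen's inequality and makes the additive-energy convention explicit, whereas the paper works out $n=2$ and extends by analogy, but the underlying argument is the same.
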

\begin{proof}
     Let us say $\xi^{*}=\{p_{x}^{*},\rho_{x}^{*}\}$ achieves $C_{1}^{Q}(B)$ for the channel $\mathcal{N}$. 
 \begin{equation}
    C_{1}^{Q}(B) = S\big[\mathcal{N} \big( \sum_{x}p_{x}^{*}\rho_{x}^{*}\big) \big] - \sum_{x}p_{x}^{*}S[\mathcal{N}(\rho_{x}^{*})] 
    \label{new}
\end{equation}
and
\begin{equation}
    Tr[H\mathcal{N} \big( \sum_{x}p_{x}^{*}\rho_{x}^{*}\big)] \geq  B.
    \label{new energy}
\end{equation}
 Now let us take the channel $\mathcal{N}^{\otimes 2}$ and the input ensemble of the form $\xi$ = \{$p_{x_{1}},\rho_{x_{1}}\} \otimes \{p_{x_{2}},\rho_{x_{2}}$\}. From definition 
\begin{align}
    C_{2}^{Q}(B) = \underset{\substack{\{p_{x_{1}}p_{x_{2}},\rho_{x_{1}} \otimes \rho_{x_{2}}\}: \\ Tr(H^{\otimes 2} \mathcal{N}^{\otimes2}(\rho^{\otimes 2}))\geq 2B}}{\max} S\left[\mathcal{N}^{\otimes 2}\left(\sum_{x_{1},x_{2}} p_{x_{1}}p_{x_{2}}(\rho_{x_{1}} \otimes \rho_{x_{2}})\right)\right] \notag \\ - \sum_{x_{1},x_{2}} p_{x_{1}}p_{x_{2}}S\left[\mathcal{N}^{\otimes 2}(\rho_{x_{1}} \otimes \rho_{x_{2}})\right]
    \label{c(B) two dim}
\end{align}
 The right side of \eqref{c(B) two dim} can be simplified.
 \begin{gather}
    \chi_{(\mathcal{N}^{\otimes 2})}(\xi) = S\left[\mathcal{N}^{\otimes 2}\left(\sum_{x_{1},x_{2}} p_{x_{1}}p_{x_{2}}(\rho_{x_{1}} \otimes \rho_{x_{2}})\right)\right] \notag
    \\ - \sum_{x_{1},x_{2}} p_{x_{1}}p_{x_{2}}S\big[\mathcal{N}^{\otimes 2}(\rho_{x_{1}} \otimes \rho_{x_{2}})\big]\\
    = S\left[\left(\sum_{x_{1}} p_{x_{1}}\mathcal{N}(\rho_{x_{1}})\right )\otimes \left(\sum_{x_{2}}p_{x_{2}}\mathcal{N}(\rho_{x_{2}})\right)\right] \notag
    \\ - \sum_{x_{1},x_{2}} p_{x_{1}}p_{x_{2}}S\left[\mathcal{N}(\rho_{x_{1}}) \otimes \mathcal{N}(\rho_{x_{2}})\right]\\
    = S\left[\sum_{x_{1}} p_{x_{1}}\mathcal{N}(\rho_{x_{1}})\right] +  S\left[\sum_{x_{2}}p_{x_{2}}\mathcal{N}(\rho_{x_{2}})\right] \notag
    \\ - \sum_{x_{1},x_{2}} p_{x_{1}}p_{x_{2}}\left(S\left[\mathcal{N}(\rho_{x_{1}})\right] + S\left[\mathcal{N}(\rho_{x_{2}})\right]\right ) 
\end{gather}
\begin{gather}
    = \underbrace{S\left[\sum_{x_{1}} p_{x_{1}}\mathcal{N}(\rho_{x_{1}})\right] - \sum_{x_{1}} p_{x_{1}}S\left[\mathcal{N}(\rho_{x_{1}})\right]}_{a}  \notag 
    \\ +  \underbrace{S\left[\sum_{x_{2}}p_{x_{2}}\mathcal{N}(\rho_{x_{2}})\right] - \sum_{x_{2}}p_{x_{2}}S\left[\mathcal{N}(\rho_{x_{2}})\right]}_{b}
    \label{decomposed_eq}
\end{gather}
We can see from \eqref{decomposed_eq} that $\chi_{\mathcal{N}^{\otimes 2}}(\xi)$ (Holevo quantity for the channel $\mathcal{N}^{\otimes 2}$) for the ensemble $\xi = \{p_{x_{1}}p_{x_{2}},\rho_{x_{1}} \otimes \rho_{x_{2}}\}$ is the sum of the individual $\chi_{\mathcal{N}}(\xi_{1})$ ($a$ in \eqref{decomposed_eq}) for the ensemble $\xi_{1} = \{p_{x_{1}},\rho_{x_{1}}\}$ and $\chi_{\mathcal{N}}(\xi_{2})$ ($b$ in \eqref{decomposed_eq}) for the ensemble $\xi_{2}= \{p_{x_{2}},\rho_{x_{2}}\}$. 

Now we must maximize information transmission over the given ensemble such that the output satisfies the energy bound. We  maximize $a$ and $b$ separately over $\xi_{1}$ and $\xi_{2}$ such that total average energy 
\begin{equation}
    Tr(H^{\otimes 2} \mathcal{N}^{\otimes2}(\rho^{\otimes 2}))=Tr(H\mathcal{N}(\rho_{1})) +  Tr(H\mathcal{N}(\rho_{2}))\geq 2B,  
    \label{energy_bound_2}
\end{equation}
where $\rho_{i} = \sum_{x_{i}}p_{x_{i}}\rho_{x_{i}}$. We can divide the total average energy as the sum of individual energies due to the absence of entanglement across multiple channel uses. One can take the ensemble $\xi^{*}$ for both $a$ and $b$ i.e.\ $\xi = \xi^{*}\otimes \xi^{*}$. It is straightforward to show this ensemble achieves the desired energy bound \eqref{energy_bound_2}. This means $\chi_{(\mathcal{N}^{\otimes 2})}(\xi^{*}) = 2C_{1}^{Q}(B)$. But there can be other input ensembles that can achieve the energy bound and give $a + b \geq 2C_{1}^{Q}(B)$. We can choose some ensemble $\xi_{1}$ and the corresponding state $\rho_1$ such that $Tr(H\mathcal{N}(\rho_{1})) \geq B_{1}$ and maximize over such ensembles which will lead us to $C_{1}^{Q}(B_{1})$ because of the monotone nature of $C_{1}^{Q}(B_{1})$, where $B_{1}\leq B$ implies $C_{1}^{Q}(B_{1})\geq C_{1}^{Q}(B)$. To achieve the desired energy bound, the energy from $\xi_{2}$ and the corresponding state $\rho_2$ should give the remaining energy. Therefore we maximize $b$ such that $Tr(H\mathcal{N}(\rho_{2})) \geq 2B - B_{1}$, which leads us to $C_{1}^{Q}(2B - B_{1})$. So we have $\max (a+b) = C_{1}^{Q}(B_{1}) + C_{1}^{Q}(2B - B_{1})$. 
\begin{align}
    x &= C_{1}^{Q}(B_{1}) + C_{1}^{Q}(2B - B_{1}) \\
    x/2 &= \frac{1}{2}(C_{1}^{Q}(B_{1}) + C_{1}^{Q}(2B - B_{1})) \\
    x/2 & \leq C_{1}^{Q}\left(\frac{1}{2} B_{1} + \frac{1}{2} (2B - B_{1})\right) 
    \label{concavity2}\\
    x &\leq 2C_{1}^{Q}(B)
    \label{ineqn}
\end{align}
for any $B$. We get the inequality in \eqref{concavity2} from the concave nature of $C_{1}^{Q}(B)$ from Theorem \ref{th1}. From \eqref{c(B) two dim}, \eqref{decomposed_eq}, \eqref{energy_bound_2}, and inequality \eqref{ineqn}, we can conclude \begin{equation}
    C_{2}^{Q}(B) = 2C_{1}^{Q}(B).
\end{equation}
Similarly, we can show the additivity \eqref{addivity} for any other value of $n$.
\end{proof}
%\textcolor{red}{
    %\subsubsection*{Entangled input states}
    In the discussion above, we have considered input states that are separable. In the case of an entangled state, the channel capacity can be superadditive \cite{hastings2009superadditivity}. 
    Though not the central focus of our paper, we briefly discuss a specific case of entangled states with \textit{entanglement-breaking} channels and study the conditions (on the channel $\mathcal{N}$ and the input states) under which the additivity property for capacity-power function holds. [See Supplemental Material at URL will be inserted by publisher]
\subsection{Examples}

\subsubsection{A Noiseless Quantum Channel}
\label{sec:noiseless_ex}
\noindent Let us take an example from Preskill \cite{preskilllecture} and add energy restrictions to it. Let us take   $$|\phi_{1}\rangle = \begin{pmatrix}
    1 \\\
    0 
    \end{pmatrix}, |\phi_{2}\rangle = \begin{pmatrix}
    -\frac{1}{2} \\\
    \frac{\sqrt{3}}{2} 
    \end{pmatrix}, |\phi_{3}\rangle = \begin{pmatrix}
    -\frac{1}{2} \\\
    -\frac{\sqrt{3}}{2} 
    \end{pmatrix}$$ \\
    with probability $\frac{1}{3}$ for each. This is a spin $1/2$ object that points in one of the three directions symmetrically distributed in the x-z plane.
    The density matrix $$\rho = \frac{1}{3}\left(|\phi_1\rangle\langle\phi_1| + |\phi_2\rangle\langle\phi_2| + |\phi_3\rangle\langle\phi_3|\right) = \frac{\mathcal{I}}{2}.$$
    These states are to be sent through a quantum channel. The von Neumann entropy for this ensemble is $S(\rho) = 1$. So the Holevo bound tells us that we can have at most $1$ as accessible information. So if the channel is noiseless, then the capacity of this channel is $1$. Now we assign energies to the outputs: $b(1)=1/3,b(2)=2/3,b(3)=0$. Then the capacity-power function will be as in Fig.~\ref{fig:6}(b).

    But in practical use, the gained mutual information may be much less than $1$ as we cannot distinguish  states very well in one shot. Suppose we choose the POVMs corresponding to the outcomes $a=1,2,3$ as $E_{a} = \frac{2}{3}\left(\mathcal{I}-\ket{\phi_{a}}\bra{\phi_{a}}\right)$
    and so the crossover probabilities are  \begin{equation}
        p(a|b) = \bra{\phi_{b}}E_{a}\ket{\phi_{b}} = \begin{cases} 0 & a=b\\ \frac{1}{2} & a \neq b. \end{cases}
        \label{cross_prob}
    \end{equation}
Even if there is no noise in the quantum channel, we encounter crossover probabilities as in \eqref{cross_prob} and fall into the setting of a noisy classical channel.
\begin{figure}
    \centering
    \includegraphics[scale=0.5]{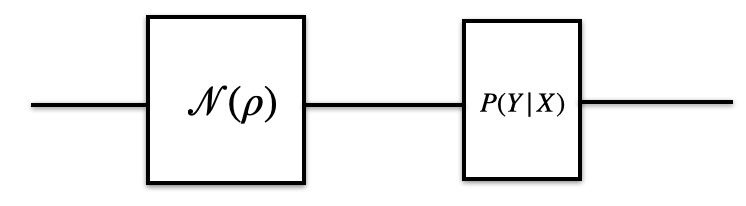}
    
    \caption{Decomposition of a typical quantum channel.}
    \label{fig:5}
\end{figure}    
Any quantum channel with non-orthogonal states as input and with any measurement setting can be decomposed as in Fig.~\ref{fig:5}. Therefore the mutual information between input and output  is $I(X:Y) = H(Y) - H(X|Y) = 0.5849$ (for this measurement setting), which is much less than the optimum value $1$. If we choose to encode the information to multiple inputs and use joint measurement settings at the output, then we can asymptotically reach the optimum rate.

\begin{figure}
    \centering
    \includegraphics[width = 8.5 cm]{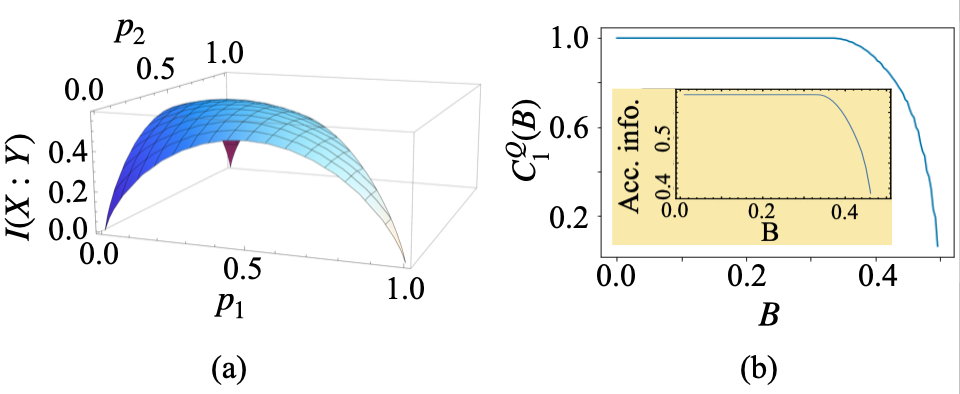}
    
    \caption{(a) Mutual information profile, where the x-axis is the input probability $p_{1}$ of choosing $\ket{\phi_{1}}$ and the y-axis is the input probability $p_{2}$ of choosing $\ket{\phi_{2}}$. The z-axis denotes the mutual information between the input and the output. (b) Capacity-power function for the noiseless channel of the example in Sec.~\ref{sec:noiseless_ex}; in the inset, the accessible information using the measurement setting described in Sec.~\ref{sec:noiseless_ex}.}
    \label{fig:6}
\end{figure}

If we plot mutual information as a function of input probabilities, we get Fig.~\ref{fig:6}(a). We can see that the maximum is roughly $0.5849$ and is achieved by $p_{1} = p_{2} = p_{3} = \frac{1}{3}$ as we saw from calculations before.

Note that this $I(X;Y)$ is not the actual capacity of the channel; the actual capacity of the channel is $S(\rho)=1$, which might be achieved by some set of POVMs. The capacity-power function (actually accessible information) for this particular measurement scheme is given by the plot in the inset of Fig.~\ref{fig:6}(b).

Let us discuss the physical origin of the plot. We assigned energies to the outputs: $b(1)=1/3$, $b(2)=2/3$, $b(3)=0$, then we can calculate the energy corresponding to the capacity-achieving distribution $\left(p_{1} = p_{2} = p_{3} = \frac{1}{3} \right)$ which is $\frac{1}{3}$. Clearly, if the desired threshold energy at the output, $B$, is less than $\frac{1}{3}$ then the capacity-power function is the same as the unconstrained capacity, but if $B\geq\frac{1}{3}$ then the allowed input probabilities will slide down along the surface in Fig.~\ref{fig:6}(a) yielding the capacity-power function for this measurement scheme.

\subsubsection{50-50 Beam Splitting Quantum Channel}
Now let us calculate the capacity-power function for a noisy quantum channel. Let $\rho_{in}^{s}$ be the state we want to send over the quantum channel $\mathcal{N}$. $\rho_{out}$ = $U\left(\rho_{in}^{s}\otimes \rho_{in}^{e}\right)U^{\dagger}$ and $\rho_{out}^{s} = Tr_{e}(\rho_{out})$.  Here, $$\rho_{out}^{s} = \mathcal{N}(\rho_{in}^{s})$$ is a completely positive trace-preserving (CPTP) map from input to output.

In the example, we take our input state $\rho_{in}^{s}$ to be in any of the pure coherent states $\{\ket{\alpha},\ket{\beta},\ket{\gamma}\}$ depending on the source probabilities. We have taken $U$ as a quantum beam splitter (BS), which acts with a probability $p_{b}$.
\begin{equation}
    U = \exp{\left[i\theta\left(a_{0}^{\dagger}a_{1} + a_{0}a_{1}^{\dagger}\right)\right]}
    \label{eq no.15}
\end{equation} 
where $a_{0}$ and $a_{1}$ are the annihilation operators of the two input ports. We have taken one of the inputs (state of the environment) of the BS as a vacuum state ($\ket{0}$) and $\theta= \pi/4$ (50-50 BS). So $$\rho_{in}^{e} = \ket{0}_{e}\bra{0},$$ at each use of the channel. We know that \cite{gerry_knight_2004}:
\begin{equation}
    \ket{\alpha}\ket{0}\xrightarrow[]{\text{BS}}\ket{\frac{\alpha}{\sqrt{2}}}\ket{\frac{\alpha}{\sqrt{2}}}.
\end{equation}
If the BS acts with probability $p_{b}$, then the possible outputs ($\rho_{out}^{s}$) of the channel are given in Fig.~\ref{fig:9}. 

\begin{figure}
    \centering
    \includegraphics[scale = 0.4]{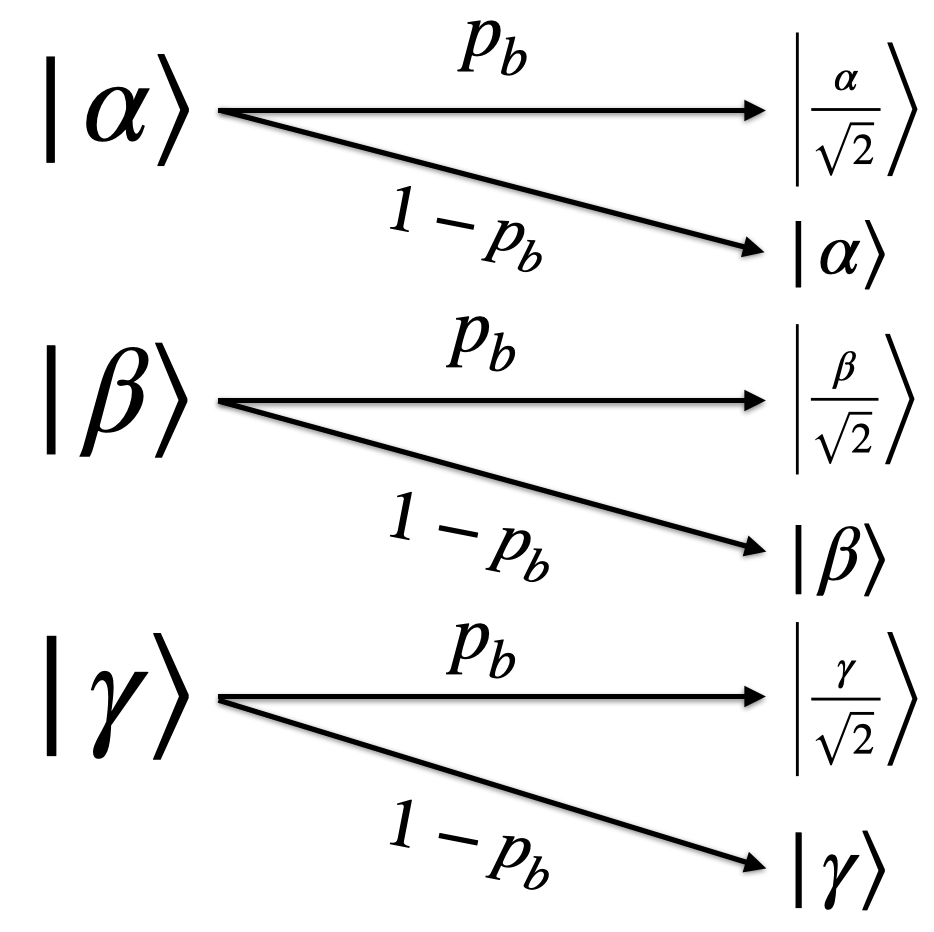}
    
    \caption{Possible outputs of the beam splitting channel}
    \label{fig:9}
\end{figure}
Our input state is $$\rho_{in}^{s} = \sum_{i=\alpha,\beta,\gamma}p_{i}\ket{i}\bra{i}.$$ It is follows directly that
$$\rho_{out}^{s} = \sum_{i=\alpha,\beta,\gamma}p_{i}p_{b}\ket{\frac{i}{\sqrt{2}}}\bra{\frac{i}{\sqrt{2}}} + \sum_{i = \alpha,\beta,\gamma}p_{i}(1-p_{b})\ket{i}\bra{i}.$$
Note that the capacity of any channel is monotone with respect to the noise probability $p_{b}$ because noise reduces the distinguishability of the states. 
We take $H = a^{\dagger}a$ which is common in practice; then each of the output states has an intensity of $|i|^{2}$ where $i = \alpha,\beta,\gamma,\alpha/\sqrt{2},\beta/\sqrt{2},\gamma/\sqrt{2}$.
For any given $p_{b}$ we get the capacity-power function and the average energy at the output will be $$Tr(H \mathcal{N}(\rho_{in}))= \sum_{i=\alpha,\beta,\gamma}p_{i}p_{b}|i|^{2}/2 + \sum_{i = \alpha,\beta,\gamma}p_{i}(1-p_{b})|i|^{2}.$$

The capacity-power function for this channel with three states in an ensemble for different values of noise parameter $p_{b}$ is shown in Fig.~\ref{capa_en_all_plots}(1.(a)--1.(c)). There are protocols (one such one is given in \cite{sidhu2021unambiguous}) to distinguish between coherent states using particular encoding and decoding schemes but all of them are bounded above by our results.
\begin{figure*}
\centering
\includegraphics[scale=.26]{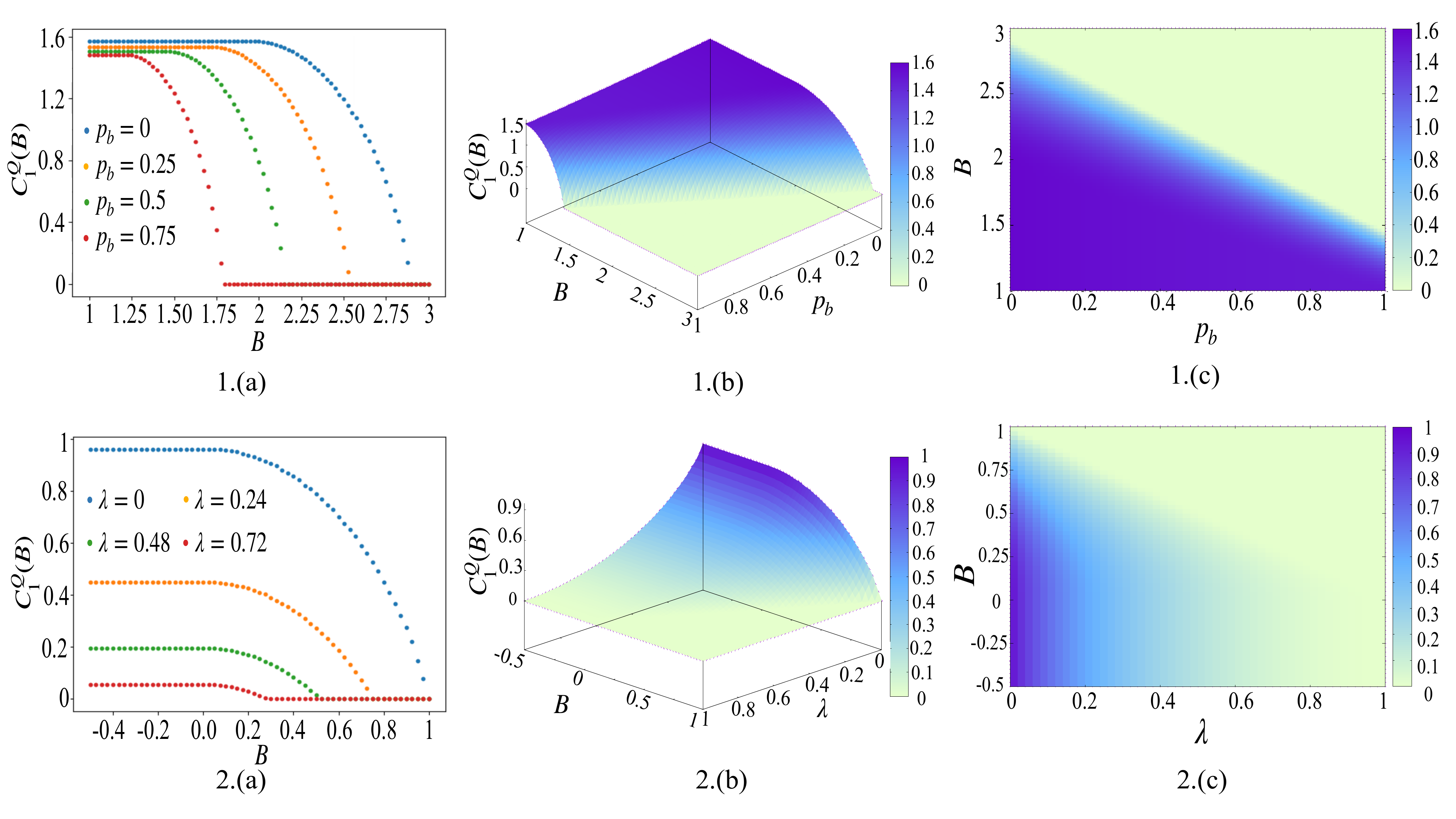} 

\caption{1.(a) Capacity-power function maximized over input probabilities for $\alpha = 1,\beta = \sqrt{2}e^{2\pi i/3},\gamma = \sqrt{3}e^{4\pi i/3}$ for different values of  noise parameter $p_{b}$ using \eqref{single capa energy}. 1.(b) Variation of capacity-power function for all values of the noise parameter $p_{b}$. 1.(c) Two-dimension projection of capacity-power function in $p_{b}-B$ plane. 2.(a) Variation of capacity-power function $C_{1}^{Q}(B)$ for the depolarizing channel maximized over the probabilities for different values of noise parameter $\lambda $. We took our signal states to be fixed at $\ket{\psi_{1}} = \ket{0}$, $\ket{\psi_{2}} = \cos(\pi/3)\ket{0} + \sin(\pi/3)\ket{1}$ and $\ket{\psi_{3}} = \cos(2\pi/3)\ket{0} + \sin(2\pi/3)\ket{1}$ 2.(b) Variation of the capacity-power function for the depolarizing channel over all noise parameters $\lambda$. 2.(c) Two-dimensional projection of capacity-power function for the depolarizing channel.
}

\label{capa_en_all_plots}
\end{figure*}

\subsubsection{Depolarizing Channel}

We calculate the capacity-power function for the quantum depolarizing channel with noise parameter $\lambda$.  The action of this channel on a state is given by 
$$\mathcal{N}_{\lambda}(\rho)=(1-\lambda) \rho+\frac{\lambda}{d} \mathds{1} \label{depol_channel}$$ 
where $d$ is the dimension of the Hilbert space.  We require $0 \leq \lambda \leq 1+\frac{1}{d^2-1}$ to make this channel completely positive.  To calculate the capacity-power function, we encode the message to one of three pure states $\{\ket{\psi_{i}}\}$, where $\ket{\psi_{i}} = \cos(\theta_{i}/2)\ket{0} + e^{i\phi_{i}}\sin(\theta_{i}/2)\ket{1}$, on the Bloch sphere where each state is chosen with probability $\{p_{i}\}$.  So \begin{align}
\rho_{in} =& \sum_{i}p_{i}\ket{\psi_{i}}\bra{\psi_{i}}\\
\rho_{out} =& \mathcal{N}_{\lambda}(\rho_{in}) = \sum_{i}p_{i} \mathcal{N}_{\lambda}(\ket{\psi_{i}}\bra{\psi_{i}}).
\end{align}
We take the Hamiltonian to be $\sigma_{z}$, which is relevant in the presence of an external magnetic field acting on the system in the $z$ direction. So from \eqref{single capa energy}, 
\begin{align}
C_{1}^{Q}(B) = \underset{\substack{\{p_{i},\ket{\psi_{i}}\}: \\ Tr(\sigma_{z}\rho_{out})\geq B}}{\max}\left[\mathrm{~S}\left(\mathcal{N}_{\lambda}(\rho_{in})\right)-\sum_{i} p_{i} \mathrm{~S}\left(\mathcal{N}_{\lambda}(\ket{\psi_{i}}\bra{\psi_{i}})\right)\right].
\end{align}
The capacity-power function for the depolarizing channel is shown in Fig.~\ref{capa_en_all_plots}(2.(a)--2.(c)).

We have seen that the capacity-power function is a concave function of the upper bound associated with the constraint, and this property also makes the function additive for a particular setting of the communication protocol. Now, suppose we have $\mathcal{R}$ number of constraints, i.e.\ the output state has to satisfy $\mathcal{R}$ inequalities. Let us take the constraints as $Tr(H_{1}\mathcal{N}(\rho_{in})\geq B_1, 
 Tr(H_{2}\mathcal{N}(\rho_{in})\geq B_2, 
 Tr(H_{3}\mathcal{N}(\rho_{in})\geq B_3, \ldots, Tr(H_{\mathcal{R}}\mathcal{N}(\rho_{in})\geq B_{\mathcal{R}}$.
If there exists a set of states that satisfy all the constraints, then the capacity-power function will be a concave function of all the upper bounds of the constraints and it will also be additive. 
\label{multiple_cons}
    
\section{Capacity-Power Function and Random Quantum States}

We are interested in the capacity-power function 
associated with random quantum states.
Consider the following scenario: The sender encodes information in a random pure state given by $\ket{\psi} = \sum_{i} \sqrt{p_{i}}e^{i\phi_{i}} \ket{i}$.
The receiver then makes projective measurements to obtain outcomes $\ket{i}$ with probability $p_i$. The density matrix after the action of the receiver is 
$\rho_{in} = \sum_{i = 1}^{N}p_{i}\ket{i}\bra{i}$. Thus, from an information-theoretic point of view, we can map a randomly chosen mixed state from an ensemble $\zeta$ to a random pure state $\ket{\psi}$.
Alternatively, one can consider a fiducial state acted upon by a  random unitary or a chaotic map.   If we take the Hilbert space dimension $N$ to be large, we can incorporate more orthogonal signal states in the input ensemble.

For an ensemble with pure input states that are orthogonal rank 1 projectors $\zeta = \{p_{i},\pi_{i}\}$, the noiseless channel capacity is given by the von Neumann entropy of the input ensemble, which is also the Shannon entropy of the probabilities.  From random matrix theory we know if we take any uniformly distributed random pure state  $\ket{\psi} = \sum_{i} \sqrt{p_{i}}e^{i\phi_{i}} \ket{i}$, then the joint distribution of the $\{p_{i}\}$ will be uniform over the $N-1$ dimensional probability space \cite{wooters}. To get this uniform distribution we need to draw each $p_{i}$ from a distribution \begin{equation}
	\rho(p_{i}) = N e^{-Np_{i}}.
	\label{weight}
\end{equation}
The normalization condition of the $\{p_{i}\}$, i.e.\ that $\sum p_{i} = 1$, is automatically satisfied for large $N$ as $\langle\sum p_{i}\rangle = 1$ and $var(\sum p_{i}) \propto N^{-1}$. So a typical random state from such an ensemble will have an entropy \cite{wooters,Zyczkowski_1990}: 
\begin{equation}
	\mathcal{H}=-N^{2} \int_0^{\infty}p\ln p \:  e^{-Np} d p=\ln N-0.422784.
	\label{typical_entropy}
\end{equation}
The constant $0.422784$ is $1 - \gamma$, where $\gamma$ is the Euler-Mascheroni constant. We give another way to arrive at the above  expression [See Supplemental Material at URL will be inserted by publisher] which essentially was first given in \cite{madhok2008}.
The variance of $\mathcal{H}$ goes to zero for large $N$ (see Fig.~\ref{variance}), akin to typicality arguments in information theory that rely on the concentration of measure. The entropy $\mathcal{H}$ for a typical random state  only differs from the maximum possible entropy $\ln N$ by a constant. Hence, we do not need a maximization process to compute capacity corresponding to a  large Hilbert space, since any randomly picked state has the same typical entropy. 
Therefore,  capacity of a noiseless channel corresponding to  a large dimension $N$ 
attains the value given by \ref{typical_entropy}. This is also the channel capacity associated with typical states with no energy constraint.

\begin{figure}
	\centering
	\includegraphics[scale = 0.13]{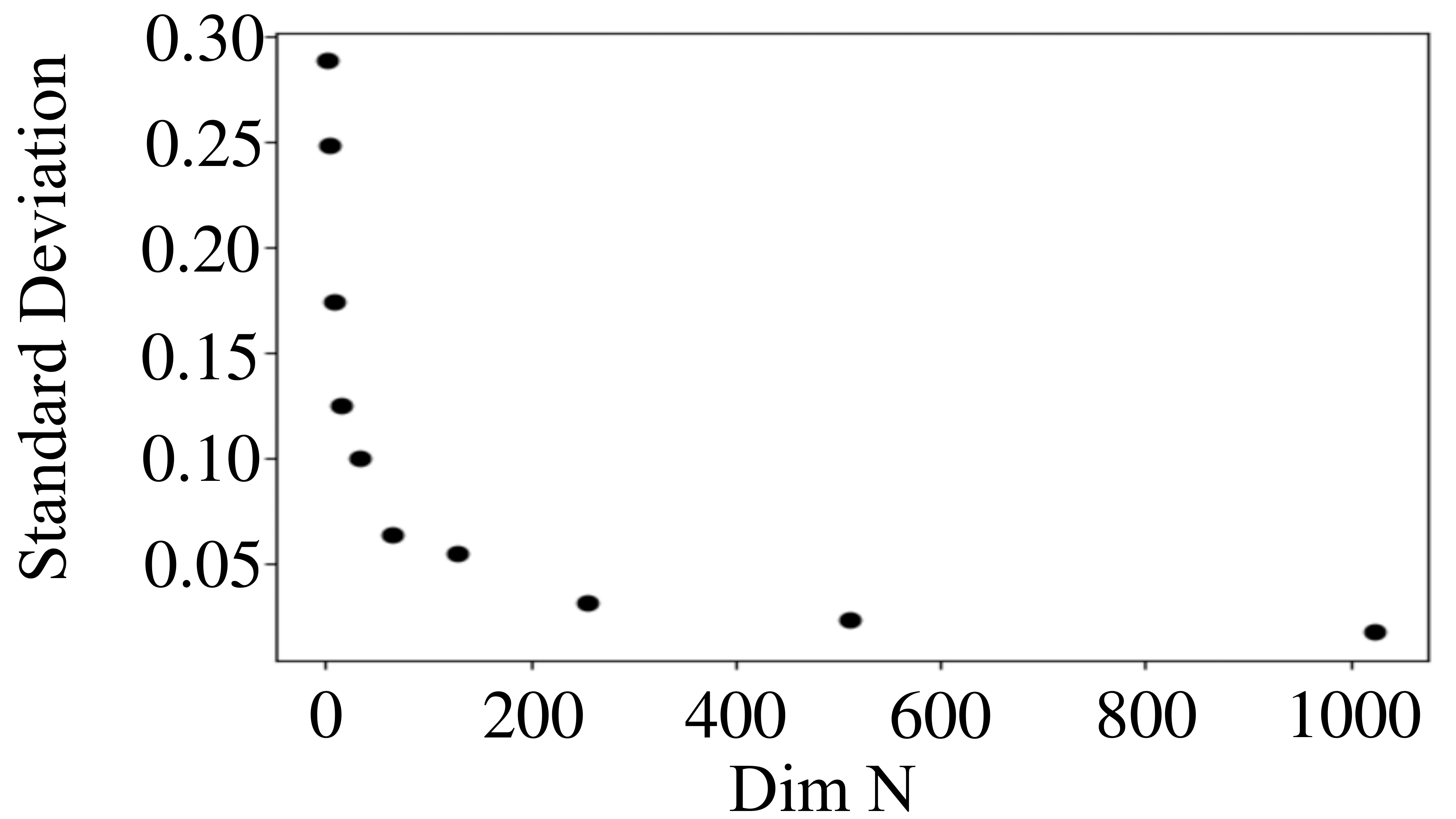}
	
	\caption{Standard deviation of noiseless channel capacity $\mathcal{H}$ as a function of the dimension of Hilbert space.}
	\label{variance}
\end{figure}

 In order to get the capacity-power function for a given threshold of output energy, 
 we require $Tr[H\rho] \geq B$. In other words, 
 $\sum_{i}p_{i}b_{i}\geq B$ where $b_{i} = Tr[H\pi_i] $. In this case, we need to randomly pick states which are uniform, not all over the sphere but from the part consistent with the above constraint. As the distribution of the probabilities is uniform over the $N-1$ dimensional probability sphere for uniformly distributed random pure state, we can expect the same for this sliced space of states. If we pick the probabilities from \begin{equation}
	\rho\left(p_n\right)=\left(\nu+\mu b_n\right) \exp \left[-\left(\nu+\mu b_n\right) p_n\right]
\end{equation} along with the conditions \begin{align}
	\sum_n \frac{1}{\nu+\mu b_n}&=1 \label{56} \\
	\sum_n \frac{b_n}{\nu+\mu b_n}&=B \label{57}
\end{align}
we can get the desired uniformity over the sliced space of probabilities \cite{wooters} that satisfy  $\sum_{i}p_{i}b_{i}= B$. The typical entropy associated with these conditions is 
\begin{align}
	\mathcal{H} &=-\sum_n\left(\nu+\mu b_n\right) \int_0^{\infty}\left(p_n \ln p_n\right) e^{-\left(\nu+\mu b_n\right) p_n} d p_n \\
	&=\tilde{\mathcal{H}}-0.422784 \label{entropy_energy}
\end{align} where $\tilde{\mathcal{H}}$ is the entropy of the probability distribution defined by $P_{n} = 1/(\nu + \mu b_{n})$.

Let us denote  $B_{t}$ as the value of the energy associated with a typical state. 

\begin{equation}
	B_{t}  = \langle B \rangle = \langle\sum_{n} p_{n}b_{n}\rangle = \sum \langle p_{n} \rangle b_{n} = \sum_{n} \frac{b_{n}}{N}
\end{equation} and the variance about $B_{t}$ is 
\begin{equation}
	\langle B^{2} \rangle - \langle B \rangle ^{2} =  \frac{\sum b_{n}^{2}}{N^{2}}\rightarrow 0 \:\:\text{as} \:\:N\rightarrow \infty.
\end{equation}
Here we use the fact that $\langle p_{n} \rangle = 1/N$ and $\langle p_{n}p_{m} \rangle -  \langle p_{n}\rangle \langle p_{m} \rangle = \delta_{mn}/N^{2}$ for uniform distribution (from \eqref{weight}).

When the threshold output energy $B\leq B_t$, the capacity-power function is just the unconstrained capacity since the capacity-achieving distribution already satisfies the minimum energy constraint. Moreover, for sufficiently large dimensions, the capacity becomes independent of the distribution (typical entropy) as can be seen in \eqref{typical_entropy}. 

Beyond $B_{t}$, $C_{1}^{Q}(B)$ is a monotonically  decreasing function of $B$. 
for $B > B_{t}$, the capacity-power function will be given by \eqref{entropy_energy} where $\nu$ and $\mu$ can be determined from \eqref{56} and \eqref{57}.

 Therefore, we can write the analytical expression for the capacity-power function for a noiseless channel for a system of large dimensions by only using the statistical properties without the computational maximization step. The capacity-power function can be written as 
\begin{align}
	C_{1}^{Q}(B) = 
	\begin{cases}
		\ln{N} - 0.422784 & 0 \leq B\leq B_{t}  \\
		- \sum_{n}\frac{1}{\nu + \mu b_{n}}\ln\left(\frac{1}{\nu + \mu b_{n}}\right) - 0.422784 & \forall B> B_{t} 
	\end{cases}  \label{capa_en_wooters}
\end{align}
where $\nu$ and $\mu$ are from \eqref{56}, \eqref{57}. The constant $0.422784$ will not affect the expression significantly for large dimensions.

It follows directly that $C_{1}^{Q}(B)$ from \eqref{capa_en_wooters} is a concave function of $B$ which matches with Theorem \ref{th1}. To prove this let us consider two different distributions of probabilities labelled as $\rho(p_{n}^1)$ and $\rho(p_{n}^2)$ parameterized by $\{\nu_1,\mu_1\}$ and $\{\nu_2,\mu_2\}$ respectively where the parameters obey:
\begin{align}
	\sum_n \frac{1}{\nu_1+\mu_1 b_n}&=1 \:\:\text{and}\:\: \sum_n \frac{1}{\nu_2+\mu_2 b_n}=1 \label{B1_1} \\
	\sum_n \frac{b_n}{\nu_1+\mu_1 b_n}&=B_1 \:\:\text{and}\:\:\sum_n \frac{b_n}{\nu_2+\mu_2 b_n}=B_2 \label{B2_2}
\end{align}
Therefore $C_1^{Q}(B_1)/C_1^{Q}(B_2) = \tilde{\mathcal{H}_1}-0.422784 / \tilde{\mathcal{H}_2}-0.422784$ where $\tilde{\mathcal{H}_k}$ is the entropy of the probability distribution defined by $p_{n}^k =\frac {1}{(\nu_k + \mu_k b_{n})}$. Note that we are only interested in the region $B_k\geq B_{t}$.

Now take another variable $p_n$ denoted as $\alpha_1 p_n^{1} + \alpha_2 p_n^{2}$. It is clear from \eqref{B2_2} that the energy corresponding to this probability distribution will be $B = \alpha_1 B_1 + \alpha_2 B_2$. We have already seen that the capacity-power function is nothing but the average Shannon entropy of the distribution. This implies 
\begin{align}
   &  \:\:\: C_1^{Q}(B) = \left< -\sum p_n \ln{p_n}\right> \\
    & = \left< -\sum (\alpha_1 p_n^{1} + \alpha_2 p_n^{2}) \ln{(\alpha_1 p_n^{1} + \alpha_2 p_n^{2})}\right> \\
    &\geq \alpha_1\left< -\sum  p_n^{1} \ln{ p_n^{1} }\right>  + \alpha_2  \left<  -\sum p_n^{2} \ln{  p_n^{2})}\right>
\end{align}
Therefore $C_1^{Q}(B) \geq \alpha_1 C_1^{Q}(B_1)+\alpha_2 C_1^{Q}(B_2)$.

\section{Private Capacity-Power of a Quantum Channel}

The private capacity of a quantum channel is the maximum rate at which classical information can be sent \emph{reliably} and \emph{privately}; operational definitions and coding theorems are given in prior work  \cite{energy_constrained_private_and_quantum_capacity,Devetak2005,Devetak_private}. The protocol is shown in Fig.~\ref{fig: Private channel}. 
%\textcolor{red}{
Along with channel capacity, private capacity has been extensively studied \cite{wilde_book}. It is natural to extend how this quantity behaves under 
energy constraints and properties of private capacity-power functions. 
As the reliability of information transmission is a highly desirable property, 
one may ask the question: How does reliable transfer of information behave when the goal is to also transmit energy to the receiver as well?
%}

%\textcolor{red}{Imposing an energy constraint on the output energy can help us get more security over the unconstrained case. For instance, if Bob and Eve (the intruder) use the same device that gets activated only after a certain energy threshold, Eve would require states with at least that energy. This can be prevented through proper encoding depending on the channel properties. However, this improved security comes at the cost of a reduced information transmission rate. In the upcoming sections, we analyze how the private communication rate is influenced by these energy constraints. }

\subsection{Private Capacity-Power Function of a Quantum Channel}
\begin{figure}
    \centering
    \includegraphics[scale = .25]{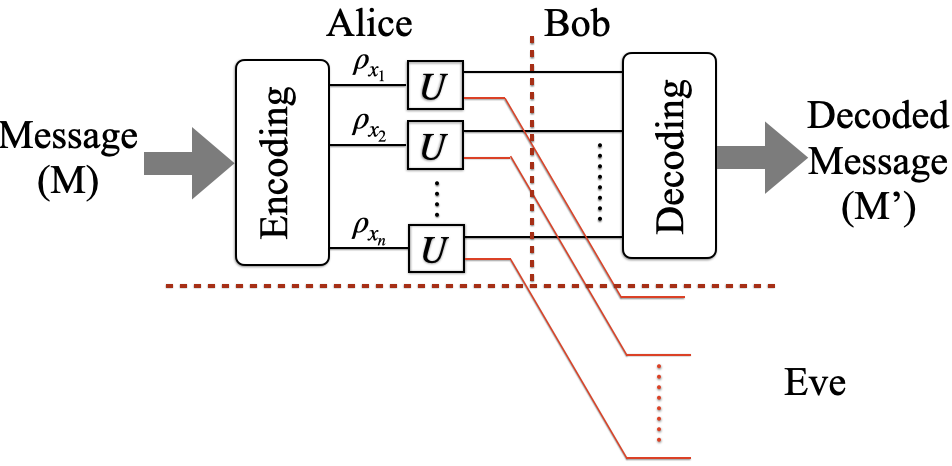}
    
    \caption{A typical protocol for communicating privately}
    \label{fig: Private channel}
\end{figure}

The single-use private capacity of channel $\mathcal{N}$ is defined as \begin{equation}
P_{1}^Q(\mathcal{N}) = \max _{\{p_{x}, \rho_{x}\}}\left(\chi_{A B}^{(1)}-\chi_{A E}^{(1)}\right)
\end{equation} where $\chi^{(1)}_{AB}$ is the same as $\chi(\xi^{'})$ of \eqref{signle use capa} and $\chi^{(1)}_{AE}$ is the Holevo quantity of the output of the complimentary channel $\mathcal{N}^{c}$.
The  regularized private capacity is defined as 
\begin{equation}
    P^{Q}(\mathcal{N}) = \underset{n\rightarrow \infty}{\lim} \frac{1}{n}P_{n}^{Q}(\mathcal{N}^{\otimes n}).
    \label{regulasied private capacity}
\end{equation}
Equivalently, 
\begin{equation}
P^{Q}(\mathcal{N})= \underset{n\rightarrow \infty}{\lim}\: \frac{1}{n}\max _{\{p_{x}, \rho_{x}\}}\left(\chi_{A B}^{(n)}-\chi_{A E}^{(n)}\right).
\label{private capa in terms of holevo chi's}
\end{equation}
We define the private capacity-power function $P(B)$ for a quantum channel as
\begin{equation}
P^{Q}(B)=\underset{n\rightarrow \infty}{\lim} \frac{1}{n} \left [\underset{\substack{\{p_{X^{n}},\rho_{X^{n}} \}: \\ Tr(H^{\otimes n} \mathcal{N}^{\otimes n}(\rho^{\otimes n}))\geq nB}}{\max}\left(\chi_{A B}^{(n)}-\chi_{A E}^{(n)}\right)\right]
\label{private capa energy function in terms of holevo chi's}
\end{equation} 
The operational significance of this definition follows directly from previous coding theorems on private capacity.

\subsection{Properties of Private Capacity-Power Function}

For an ensemble $\{p_{i},\rho_{i}\}$, let us denote $\chi_{A B}^{(n)}-\chi_{A E}^{(n)}$ as  $\mathcal{X}^{(n)}_{P}(\{p_{i},\rho_{i}\})$. 

\begin{theorem}
    $\mathcal{X}^{(1)}_{P}(\{p_{i},\rho_{i}\})$ is a \textbf{concave} function of input probabilities for degradable channels.
\begin{equation}
	\mathcal{X}^{(1)}_{P}(\{\alpha_{1}p_{i}^{1} + \alpha_{2}p_{i}^{2},\rho_{i}\})  \geq \alpha_{1} 	\mathcal{X}^{(1)}_{P}(\{p_{i}^{1},\rho_{i}\}) + \alpha_{2} 	\mathcal{X}^{(1)}_{P}(\{p_{i}^{2},\rho_{i}\})
	\label{concavity of private chi}
\end{equation}
\label{th3}
\end{theorem}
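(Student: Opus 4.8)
The plan is to reduce the statement to the concavity of the coherent information $S(\mathcal{N}(\rho)) - S(\mathcal{N}^c(\rho))$ in the averaged input state $\rho$, and then to establish that concavity for degradable channels by a single application of the data-processing inequality. First I would write the quantity out explicitly. With the signal states $\rho_i$ held fixed and $\rho = \sum_i p_i \rho_i$, the definitions of $\chi_{AB}^{(1)}$ and $\chi_{AE}^{(1)}$ give
\begin{align}
\mathcal{X}^{(1)}_P(\{p_i,\rho_i\}) &= \big[S(\mathcal{N}(\rho)) - S(\mathcal{N}^c(\rho))\big] \notag \\
&\quad - \sum_i p_i \big[S(\mathcal{N}(\rho_i)) - S(\mathcal{N}^c(\rho_i))\big].
\end{align}
Because the $\rho_i$ are fixed, the subtracted sum is affine (hence both concave and convex) in the probabilities $\{p_i\}$ and so cannot influence concavity. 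The question therefore collapses to showing that $f(\rho) := S(\mathcal{N}(\rho)) - S(\mathcal{N}^c(\rho))$ is concave as a function of $\rho$; since $\rho$ depends linearly on $\{p_i\}$ for fixed $\rho_i$, concavity of $f$ in $\rho$ is equivalent to concavity of $\mathcal{X}^{(1)}_P$ in the probabilities.

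Next I would prove concavity of $f$ for a degradable channel. Writing $\rho = \alpha_1 \rho^1 + \alpha_2 \rho^2$, I introduce an auxiliary classical flag register $X$ and form the classical-quantum state $\sigma_{XA} = \sum_k \alpha_k \ket{k}\bra{k}_X \otimes \rho^k$. Applying the Stinespring isometry of $\mathcal{N}$ to the $A$ system produces a state $\sigma_{XBE}$ whose marginals satisfy $\sigma_B = \mathcal{N}(\rho)$ and $\sigma_E = \mathcal{N}^c(\rho)$, and whose conditional entropies satisfy $S(B|X)_\sigma = \sum_k \alpha_k S(\mathcal{N}(\rho^k))$ and $S(E|X)_\sigma = \sum_k \alpha_k S(\mathcal{N}^c(\rho^k))$. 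In this language the desired inequality $f(\rho) \geq \sum_k \alpha_k f(\rho^k)$ is exactly
\begin{equation}
S(B)_\sigma - S(B|X)_\sigma \;\geq\; S(E)_\sigma - S(E|X)_\sigma,
\end{equation}
that is, $I(X:B)_\sigma \geq I(X:E)_\sigma$.

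This is where degradability does the essential work: by definition there is a channel $\mathcal{D}$ with $\mathcal{N}^c = \mathcal{D}\circ\mathcal{N}$, so $\sigma_{XE}$ is obtained from $\sigma_{XB}$ by applying $\mathcal{D}$ to the $B$ system alone. The data-processing inequality for the quantum mutual information then yields $I(X:B)_\sigma \geq I(X:E)_\sigma$ immediately, which establishes the concavity of $f$ and hence of $\mathcal{X}^{(1)}_P$. I expect the main obstacle to be conceptual rather than computational: recognizing that the private Holevo difference is the coherent information up to an affine term, and remembering that coherent information is concave \emph{only} for degradable channels. It is precisely the degrading map $\mathcal{D}$ that converts the claim into a data-processing statement; for a general channel the inequality $I(X:B)_\sigma \geq I(X:E)_\sigma$ can fail, consistent with the theorem's restriction to degradable channels.
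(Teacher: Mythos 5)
Your proof is correct and follows essentially the same route as the paper's: both strip away the terms that are affine in the probabilities, invoke degradability to write $\mathcal{N}^{c} = \mathcal{D}\circ\mathcal{N}$, and close the argument with a single data-processing inequality. Your inequality $I(X:B)_\sigma \geq I(X:E)_\sigma$ for the flagged classical-quantum state is exactly the paper's statement $\chi(\Lambda) \geq \chi(\mathcal{D}(\Lambda))$ for the ensemble $\Lambda = \{\alpha_k,\sigma^k\}$, since the Holevo quantity of an ensemble equals the mutual information of the corresponding classical-quantum state.
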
 
\begin{proof}
    From the definition of the Holevo quantity in \eqref{holevo_chi} we can write
\begin{align*}
	\mathcal{X}^{(1)}_{P}(\{\alpha_{1}p_{i}^{1} + \alpha_{2}p_{i}^{2},\rho_{i}\}) = \chi_{A B}^{(1)}-\chi_{A E}^{(1)}  
\end{align*}
\begin{align}
	 &=\underbrace{S\left[\mathcal{N}\left(\sum_{i}\left(\alpha_{1}p_{i}^{1} + \alpha_{2}p_{i}^{2}\right)\rho_{i}\right)\right]}_{a} - \sum_{i}\left(\alpha_{1}p_{i}^{1} + \alpha_{2}p_{i}^{2}\right)S\left[\mathcal{N}\left(\rho_{i}\right)\right]  \notag  \\
&- \underbrace{S\left[\mathcal{N}^{c}\left(\sum_{i}\left(\alpha_{1}p_{i}^{1} + \alpha_{2}p_{i}^{2}\right)\rho_{i}\right)\right]}_{b} - \sum_{i}\left(\alpha_{1}p_{i}^{1} + \alpha_{2}p_{i}^{2}\right)S\left[\mathcal{N}^{c}\left(\rho_{i}\right)\right]
\label{eq38}
\end{align}
Now it is clear that concavity will come from parts $a$ and $b$ of \eqref{eq38} since the other two terms are linear.  So we must show 
\begin{align}
 a-b &\geq \alpha_{1}\left(S\left[\mathcal{N}\left(\sum_{i}p_{i}^{1}\rho_{i}\right)\right] - S\left[\mathcal{N}^{c}\left(\sum_{i}p_{i}^{1}\rho_{i}\right)\right]\right)  \notag  \\
 &+ \alpha_{2}\left(S\left[\mathcal{N}\left(\sum_{i}p_{i}^{2}\rho_{i}\right)\right] - S\left[\mathcal{N}^{c}\left(\sum_{i}p_{i}^{2}\rho_{i}\right)\right]\right). 
 \label{ineq 39}
\end{align}
From the properties of degradable channels \cite{wilde_book}, we know a channel $\mathcal{N}$ is degradable if its complimentary channel $\mathcal{N}^{c}$ can be written as 
$$\mathcal{N}^{c} \equiv \mathcal{D} \circ \mathcal{N}$$ where $\mathcal{N}^{c}$ is the channel from the system to the environment,  i.e.\ $\mathcal{N}^{c}(\rho) = Tr_{\{B\}} \:(U_{A\rightarrow BE}\rho_{A} U_{A\rightarrow BE}^{\dagger})$.

Let us write $\sigma^{1}=\mathcal{N}\left(\sum_{i}p_{i}^{1}\rho_{i}\right)$ and $\sigma^{2}=\mathcal{N}\left(\sum_{i}p_{i}^{2}\rho_{i}\right)$. So inequality \eqref{ineq 39} can be rewritten as 
\begin{align}
 a-b &\geq \alpha_{1}\left(S\left[\sigma^{1}\right] - S\left[\mathcal{D}\left(\sigma^{1}\right)\right]\right) + \alpha_{2}\left(S\left[\sigma^{2}\right] - S\left[\mathcal{D}\left(\sigma^{2}\right)\right]\right) 
 \label{eq40}
\end{align} where $a$ = $S\left[\alpha_{1}\sigma^{1} + \alpha_{2}\sigma^{2}\right]$ and $b$ = $S\left[\alpha_{1}\mathcal{D}\left(\sigma^{1}\right) + \alpha_{2}\mathcal{D}\left(\sigma^{2}\right)\right]$. The von Neumann entropy satisfies concavity,  i.e.\ for $\sum_{i}\lambda_{i} = 1$, 
\begin{equation}S\left(\sum_{i}\lambda_{i}\rho_{i}\right)\geq\sum_{i}\lambda_{i}S\left(\rho_{i}\right).
\label{concavity of S}
\end{equation} 

We can therefore write 
\begin{align}
a &= S\left[\sum_{k = 1,2}\alpha_{k}\sigma^{k}\right] \geq \sum_{k = 1,2}\alpha_{k}S\left[\sigma^{k}\right] \\
b &= S\left[\sum_{k = 1,2}\alpha_{k}\mathcal{D}(\sigma^{k})\right] \geq \sum_{k = 1,2}\alpha_{k}S\left[\mathcal{D}(\sigma^{k})\right]. 
\label{eq42}
\end{align} 
Now consider the quantity 
\begin{align}
x &= \left(S\left[\sum_{k = 1,2}\alpha_{k}\sigma^{k}\right] - \sum_{k = 1,2}\alpha_{k}S\left[\sigma^{k}\right]\right)\notag \\
&- \left(S\left[\sum_{k = 1,2}\alpha_{k}\mathcal{D}(\sigma^{k})\right] - \sum_{k = 1,2}\alpha_{k}S\left[\mathcal{D}(\sigma^{k})\right]\right)  \\
&= \chi(\Lambda) - \chi(\mathcal{D}(\Lambda))
\end{align} where ensemble $\Lambda = \{\alpha_{k},\sigma^{k}\}$.

From the data processing inequality we can say $x\geq0$ as $\chi(\Lambda) \geq \chi(\mathcal{D}(\Lambda))$, 
so 
\begin{align}
a-b = S\left[\sum_{k = 1,2}\alpha_{k}\sigma^{k}\right]  - S\left[\sum_{k = 1,2}\alpha_{k}\mathcal{D}\left(\sigma^{k}\right)\right]  \notag\\ 
\geq \sum_{k = 1,2}\alpha_{k}S\left[\sigma^{k}\right]  - \sum_{k = 1,2}\alpha_{k}S\left[\mathcal{D}\left(\sigma^{k}\right)\right]. \label{eq45}
\end{align}

Inequality \eqref{eq45} is the same as inequality \eqref{eq40}.  We had proven inequalty \eqref{eq40} and from \eqref{eq38} it is clear that inequalty \eqref{concavity of private chi} holds. 
\end{proof} 
\begin{corollary}
    $P_{1}^{Q}(B)$ is a concave function of B for degradable CQ channels.
\begin{equation}
    P_{1}^{Q}(\alpha_{1}B_{1} + \alpha_{2}B_{2}) \geq \alpha_{1}P_{1}^{Q}(B_{1}) + \alpha_{2} P_{1}^{Q}(B_{2})
    \label{concavity of P(B)}.
\end{equation}
\label{cor1}
\end{corollary}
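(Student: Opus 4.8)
The plan is to mirror the argument that establishes Theorem \ref{th1}, substituting the private quantity $\mathcal{X}^{(1)}_{P}$ for the Holevo quantity $\chi$ and invoking the concavity proved in Theorem \ref{th3} in place of the concavity of $\chi$ in the input probabilities. Because the channel is a CQ channel, the signal states $\{\rho_{x}\}$ are fixed at both operating points and only the probability distribution is free to vary; this is exactly the regime in which Theorem \ref{th3} applies.

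First I would fix two thresholds $B_{1}, B_{2}$ and let $\{p_{x}^{1},\rho_{x}\}$ and $\{p_{x}^{2},\rho_{x}\}$ be the ensembles, sharing the common fixed signal states, that achieve the single-use private capacity-power values $P_{1}^{Q}(B_{1})$ and $P_{1}^{Q}(B_{2})$, respectively. Writing $\rho^{k} = \sum_{x} p_{x}^{k}\rho_{x}$, admissibility gives $Tr[H\mathcal{N}(\rho^{k})] \geq B_{k}$ for $k=1,2$, while by definition $\mathcal{X}^{(1)}_{P}(\{p_{x}^{k},\rho_{x}\}) = P_{1}^{Q}(B_{k})$. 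I would then form the convex combination $p_{x} = \alpha_{1}p_{x}^{1} + \alpha_{2}p_{x}^{2}$, with $\alpha_{1}+\alpha_{2}=1$, yielding the averaged input $\rho = \sum_{x}p_{x}\rho_{x} = \alpha_{1}\rho^{1} + \alpha_{2}\rho^{2}$. Since $\mathcal{N}$ and the trace are linear, the output energy decomposes as $Tr[H\mathcal{N}(\rho)] = \alpha_{1}Tr[H\mathcal{N}(\rho^{1})] + \alpha_{2}Tr[H\mathcal{N}(\rho^{2})] \geq \alpha_{1}B_{1} + \alpha_{2}B_{2}$, so the mixed ensemble is $(\alpha_{1}B_{1}+\alpha_{2}B_{2})$-admissible. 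The definition of $P_{1}^{Q}(\cdot)$ as a maximum over admissible ensembles then gives the upper bound $\mathcal{X}^{(1)}_{P}(\{p_{x},\rho_{x}\}) \leq P_{1}^{Q}(\alpha_{1}B_{1}+\alpha_{2}B_{2})$.

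Finally I would apply Theorem \ref{th3}: because the signal states are fixed and $\mathcal{N}$ is degradable, $\mathcal{X}^{(1)}_{P}$ is concave in the probabilities, so $\mathcal{X}^{(1)}_{P}(\{\alpha_{1}p_{x}^{1}+\alpha_{2}p_{x}^{2},\rho_{x}\}) \geq \alpha_{1}\mathcal{X}^{(1)}_{P}(\{p_{x}^{1},\rho_{x}\}) + \alpha_{2}\mathcal{X}^{(1)}_{P}(\{p_{x}^{2},\rho_{x}\}) = \alpha_{1}P_{1}^{Q}(B_{1}) + \alpha_{2}P_{1}^{Q}(B_{2})$. Chaining this lower bound with the upper bound from the admissibility step yields the claimed concavity, $P_{1}^{Q}(\alpha_{1}B_{1}+\alpha_{2}B_{2}) \geq \alpha_{1}P_{1}^{Q}(B_{1}) + \alpha_{2}P_{1}^{Q}(B_{2})$.

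Once Theorem \ref{th3} is in hand the argument is essentially bookkeeping, and I expect the main obstacle to be nothing more than confirming the two structural points on which it silently rests. The first is that the energy constraint is genuinely affine in the mixing parameter: this relies on the CQ structure, since fixed $\rho_{x}$ makes $\rho$ depend affinely on the probabilities, and without fixed signal states the clean linear decomposition of the output energy would break down. The second is that the degradability hypothesis is exactly what licenses the concavity of the private quantity, which—unlike plain $\chi$—is not automatic; thus the whole corollary hinges on Theorem \ref{th3} being applicable, and I would state this dependence explicitly.
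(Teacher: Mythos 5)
Your proof is correct and follows exactly the route the paper intends: the paper's own proof of Corollary \ref{cor1} is just a one-line pointer to ``proceed as in Theorem \ref{th1}, using Theorem \ref{th3},'' and your write-up is precisely that argument spelled out --- mixing the two optimal distributions over the fixed CQ signal states, using linearity of $Tr[H\mathcal{N}(\cdot)]$ for admissibility, and invoking the concavity of $\mathcal{X}^{(1)}_{P}$ in the probabilities (which requires degradability) to close the chain of inequalities. Your closing remarks on the two structural hypotheses (fixed signal states and degradability) correctly identify exactly where the argument would break without them.
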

\begin{proof}
    We can show the private capacity-power function is concave in $B$ by proceeding similarly as we did for the unassisted case in Theorem \ref{th1}, Corollary \ref{coro 1}, and using Theorem \ref{th3}.
\end{proof}  

\begin{theorem}
    $P_{1}^{Q}(B)$ is a piecewise concave function for any noisy degradable channel when maximised over both states and probabilities.
\end{theorem}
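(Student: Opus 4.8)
The plan is to mirror the geometric argument of Theorem~\ref{piece_wise_concavity}, replacing the role played there by Theorem~\ref{th1} (concavity of $C_1^Q(B)$ for a fixed set of signal states) with Corollary~\ref{cor1} (concavity of $P_1^Q(B)$ for degradable CQ channels). First I would fix a set of input signal states $\mathcal{S}=\{\rho_x\}$ and retain only the freedom to vary the probabilities $\{p_x\}$ subject to the output energy constraint $Tr[H\mathcal{N}(\rho)]\geq B$. For this configuration the channel is a degradable CQ channel, so Corollary~\ref{cor1} applies verbatim and the private capacity-power function restricted to $\mathcal{S}$ is a concave function of $B$. Thus each admissible choice of signal-state set yields a concave branch $B\mapsto P_{1,\mathcal{S}}^{Q}(B)$.

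Next I would observe that maximizing over both states and probabilities amounts to taking the pointwise maximum (upper envelope) of these concave branches, exactly as in Fig.~\ref{graphical_proof_of_piecewise_concavity}. Picking finitely many sets of signal states $\mathcal{S}_1,\mathcal{S}_2,\ldots$ produces finitely many concave curves over any interval $[B_1,B_2]$; the optimizer follows whichever curve is highest, switching from one branch to another at the finitely many crossing points. On each subinterval delimited by consecutive crossing points the envelope coincides with a single concave branch and is therefore concave there, while globally it need not be concave. This is precisely the definition of a piecewise concave function, which establishes the claim. As remarked after Theorem~\ref{piece_wise_concavity}, enlarging the family of signal-state sets shrinks the spacing of the crossing points, so in the limit of a dense family the envelope tends to an ``almost'' concave curve.

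The step I expect to be the main obstacle --- and the only place the hypotheses genuinely matter --- is guaranteeing that every individual branch is concave, since the envelope of an arbitrary family of functions need not be piecewise concave. Concavity of each branch is exactly what Corollary~\ref{cor1} delivers, and that corollary in turn rests on Theorem~\ref{th3}, whose proof uses the degradability assumption $\mathcal{N}^c=\mathcal{D}\circ\mathcal{N}$ together with the data-processing inequality. Hence the qualifier ``degradable'' is indispensable: without it the private Holevo difference $\chi_{AB}^{(1)}-\chi_{AE}^{(1)}$ need not be concave in the probabilities, the branches could fail to be concave, and the envelope argument would break down. A secondary technical point to address is that the switching occurs at only finitely many energies when optimizing over a finite collection of signal-state sets, so that ``piecewise'' is well defined; this is immediate once each branch is concave and the collection is finite.
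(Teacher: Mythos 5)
Your proposal is correct and matches the paper's proof, which literally reads ``Same argument as Theorem~\ref{piece_wise_concavity}'': you fix a set of signal states, invoke Corollary~\ref{cor1} (resting on the degradability assumption via Theorem~\ref{th3}) to get a concave branch $B\mapsto P^{Q}_{1,\mathcal{S}}(B)$ for each such set, and then note that maximizing over states takes the upper envelope of these concave branches, which is concave between consecutive crossing points. Your write-up is in fact more explicit than the paper's one-line proof, correctly identifying where degradability is indispensable, but the route is the same.
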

\begin{proof}
    Same argument as Theorem \ref{piece_wise_concavity}.
\end{proof}

\begin{theorem}
    $P_{n}^{Q}(B)$ is additive over channel uses if we only use product states without any correlation at the input for any degradable CQ channel and also for the channels for which the $P_{1}^{Q}(B)$ is concave 
\begin{equation}
	P_{n}^{Q}(B) = nP_{1}^{Q}(B).
\end{equation}
\label{th4}
\end{theorem}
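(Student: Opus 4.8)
The plan is to mirror the proof of Theorem \ref{th2} verbatim, substituting the private quantity $\mathcal{X}_{P}$ for the Holevo quantity $\chi$, and to supply the single extra ingredient this substitution requires. First I would establish that the private Holevo quantity is \emph{additive} on uncorrelated product ensembles: for $\xi = \xi_{1} \otimes \xi_{2}$ with $\xi_{k} = \{p_{x_{k}}, \rho_{x_{k}}\}$, one should have $\mathcal{X}^{(2)}_{P}(\xi_{1} \otimes \xi_{2}) = \mathcal{X}^{(1)}_{P}(\xi_{1}) + \mathcal{X}^{(1)}_{P}(\xi_{2})$. Since $\mathcal{X}^{(2)}_{P} = \chi^{(2)}_{AB} - \chi^{(2)}_{AE}$, this reduces to showing that \emph{both} the forward Holevo quantity $\chi_{AB}$ and the environment Holevo quantity $\chi_{AE}$ tensorize on product inputs. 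The term $\chi^{(2)}_{AB}$ splits exactly as in the computation \eqref{decomposed_eq}. For $\chi_{AE}$ I would invoke the fact that if $V$ is a Stinespring isometry for $\mathcal{N}$ then $V^{\otimes n}$ is one for $\mathcal{N}^{\otimes n}$, so that on product inputs $(\mathcal{N}^{\otimes 2})^{c}(\rho_{1} \otimes \rho_{2}) = \mathcal{N}^{c}(\rho_{1}) \otimes \mathcal{N}^{c}(\rho_{2})$; applying the computation \eqref{decomposed_eq} with $\mathcal{N}$ replaced by $\mathcal{N}^{c}$ then gives $\chi^{(2)}_{AE}(\xi_{1} \otimes \xi_{2}) = \chi^{(1)}_{AE}(\xi_{1}) + \chi^{(1)}_{AE}(\xi_{2})$. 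Subtracting the two additive decompositions yields the additivity of $\mathcal{X}_{P}$.

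With this decomposition in hand, achievability is immediate. Let $\xi^{*} = \{p_{x}^{*}, \rho_{x}^{*}\}$ achieve $P_{1}^{Q}(B)$ with average state $\rho^{*} = \sum_{x} p_{x}^{*}\rho_{x}^{*}$ satisfying $Tr(H\mathcal{N}(\rho^{*})) \geq B$. The product ensemble $\xi^{*} \otimes \xi^{*}$ keeps the CQ signal states fixed across both uses, carries output energy $Tr(H\mathcal{N}(\rho^{*})) + Tr(H\mathcal{N}(\rho^{*})) \geq 2B$ precisely because there is no correlation across the two channel uses, and by the additivity above attains $\mathcal{X}^{(2)}_{P} = 2P_{1}^{Q}(B)$. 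Hence $P_{2}^{Q}(B) \geq 2P_{1}^{Q}(B)$.

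For the converse I would again split the total energy budget as $B_{1} + (2B - B_{1})$, maximizing the first private-Holevo summand $a$ subject to $Tr(H\mathcal{N}(\rho_{1})) \geq B_{1}$ and the second summand $b$ subject to $Tr(H\mathcal{N}(\rho_{2})) \geq 2B - B_{1}$. Using the non-increasing monotonicity of $P_{1}^{Q}$ to push each constraint to its boundary gives $\max(a+b) = P_{1}^{Q}(B_{1}) + P_{1}^{Q}(2B - B_{1})$ for a fixed split. Invoking the concavity of $P_{1}^{Q}$ established in Corollary \ref{cor1} then yields $\tfrac12\big[P_{1}^{Q}(B_{1}) + P_{1}^{Q}(2B - B_{1})\big] \leq P_{1}^{Q}(B)$ for every admissible $B_{1}$, so $P_{2}^{Q}(B) \leq 2P_{1}^{Q}(B)$. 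Combining the two bounds gives $P_{2}^{Q}(B) = 2P_{1}^{Q}(B)$, and iterating the same argument (or a routine induction on $n$) delivers $P_{n}^{Q}(B) = nP_{1}^{Q}(B)$. Note that degradability enters only through the concavity used in the converse, which is exactly why the statement simultaneously covers any channel for which $P_{1}^{Q}(B)$ is already known to be concave.

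The step I expect to be the main obstacle is the additivity of $\mathcal{X}_{P}$ on product ensembles, specifically controlling the $\chi_{AE}$ piece: one must confirm that the complementary channel of $\mathcal{N}^{\otimes 2}$ genuinely acts as $(\mathcal{N}^{c})^{\otimes 2}$ on product inputs, which rests on the tensorization of the Stinespring dilation. Everything downstream is a direct transcription of the proof of Theorem \ref{th2} with $\chi$ replaced by $\mathcal{X}_{P}$ and with Theorem \ref{th3} and Corollary \ref{cor1} supplying the concavity.
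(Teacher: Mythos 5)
Your proposal is correct and follows essentially the same route as the paper, whose proof is simply a pointer to the argument of Theorem \ref{th2} combined with the concavity from Corollary \ref{cor1}. You in fact supply more detail than the paper does, in particular the verification that $\chi_{AE}$ tensorizes on product inputs because the Stinespring dilation of $\mathcal{N}^{\otimes n}$ is $V^{\otimes n}$, a step the paper leaves implicit.
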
 
\begin{proof}
This additive property can be shown in a similar way as for unassisted capacity in Theorem \ref{th2} and making use of concavity from Corollary \ref{cor1}. 
\end{proof}

We see that for a degradable CQ channel, \eqref{private capa energy function in terms of holevo chi's} reduces to the single-letter formula:
\begin{equation}
    P^{Q}(B) =\max _{\{p_{i}\}: Tr(H\mathcal{N}(\rho))\geq B}\left(\chi_{A B}^{(1)}-\chi_{A E}^{(1)}\right).
    \label{private capa energy single letter formula}
\end{equation}

\subsection{Examples}
\subsubsection{Private Capacity-Power Function for Amplitude Damping Channel}
An amplitude damping channel is denoted by its Kraus operators $K_{0} = \begin{pmatrix}
    1 & 0\\
    0 & \sqrt{1-\lambda}
\end{pmatrix}$
and $K_{1} = \begin{pmatrix}
    0 & \sqrt{\lambda} \\
    0 & 0
\end{pmatrix}. $
Amplitude damping channel is a degradable channel for $\lambda \leq 0.5$. We consider an ensemble of 3 random pure qubit states with corresponding probabilities and maximized over all such ensembles. The density matrix of the environment is $$\rho_{out}^{e} = \begin{pmatrix}
    1-\lambda p & \sqrt{\lambda}\eta^{*} \\
    \sqrt{\lambda}\eta & \lambda p
\end{pmatrix}$$ when the input density operator is $$\rho_{in} = \begin{pmatrix}
    1-p & \eta^{*} \\
    \eta &  p
\end{pmatrix}.$$
We take the Hamiltonian to be $\sigma_{3}$ as we had done for calculating the unassisted capacity-power function for depolarizing channels. To calculate the private capacity-power function we take $\{\ket{\psi_{i}}\}$, where $\ket{\psi_{i}} = \cos(\theta_{i}/2)\ket{0} + e^{i\phi_{i}}\sin(\theta_{i}/2)\ket{1}$, on the Bloch sphere where each state is chosen with probability $\{p_{i}\}$, similar to what we did for the unassisted depolarizing channel. We consider an ensemble of 3 random pure qubit states as we took for unassisted case, with corresponding probabilities and maximized over all such probabilities.
We plot results in Figs.~\ref{fig:P(B)_for amp_damp} and \ref{fig:P(B)_for amp_damp_map}. Clearly, the private capacity-power function is concave in $B$. 
\begin{figure}
    \centering
    \includegraphics[scale = 0.13]{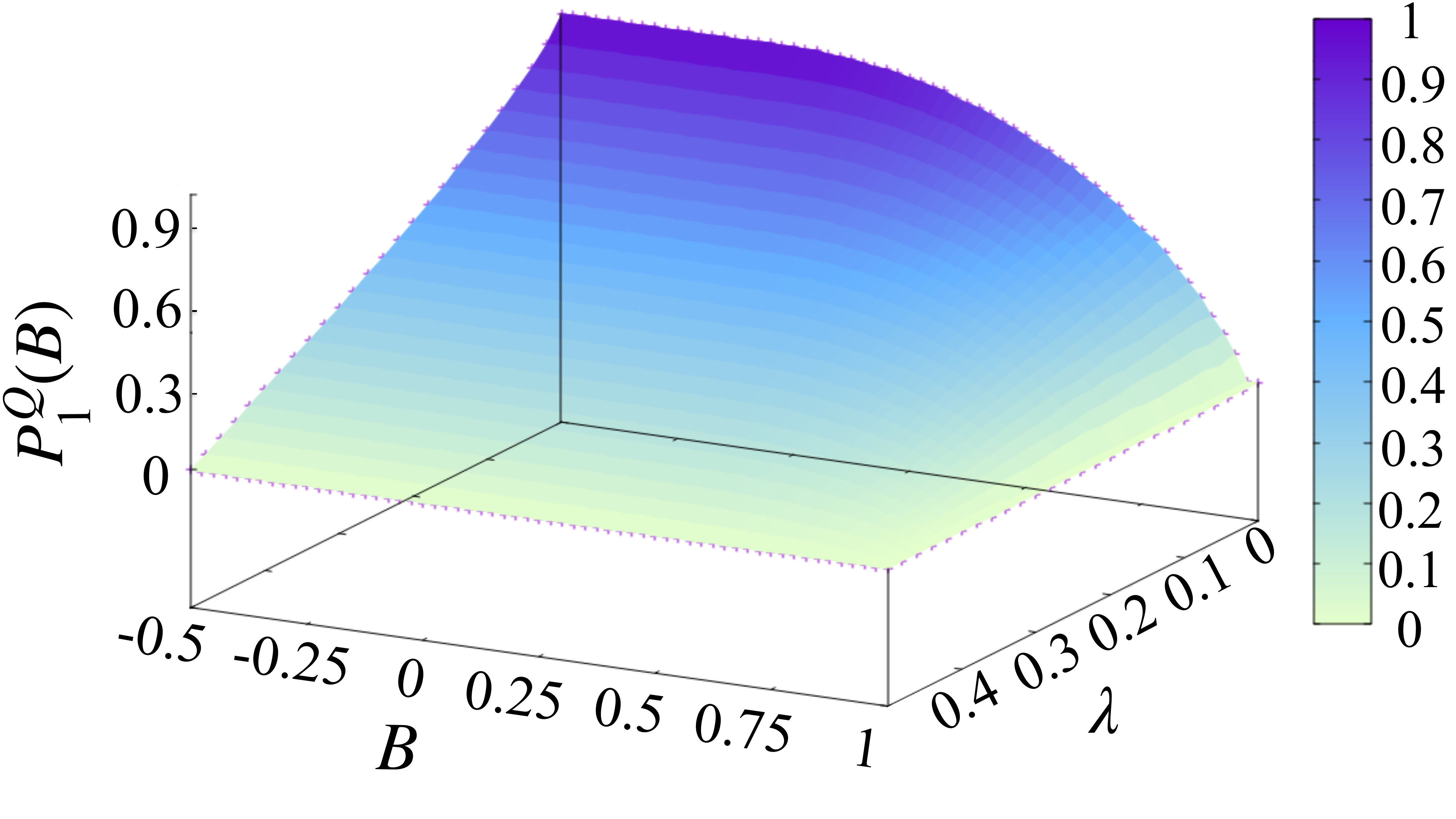}
   
    \caption{Variation of private channel capacity-power function $P_{1}^{Q}(B)$ for amplitude damping channel for different values of noise parameter $\lambda$.}
    \label{fig:P(B)_for amp_damp}
\end{figure}
\begin{figure}
    \centering
    \includegraphics[scale = 0.13]{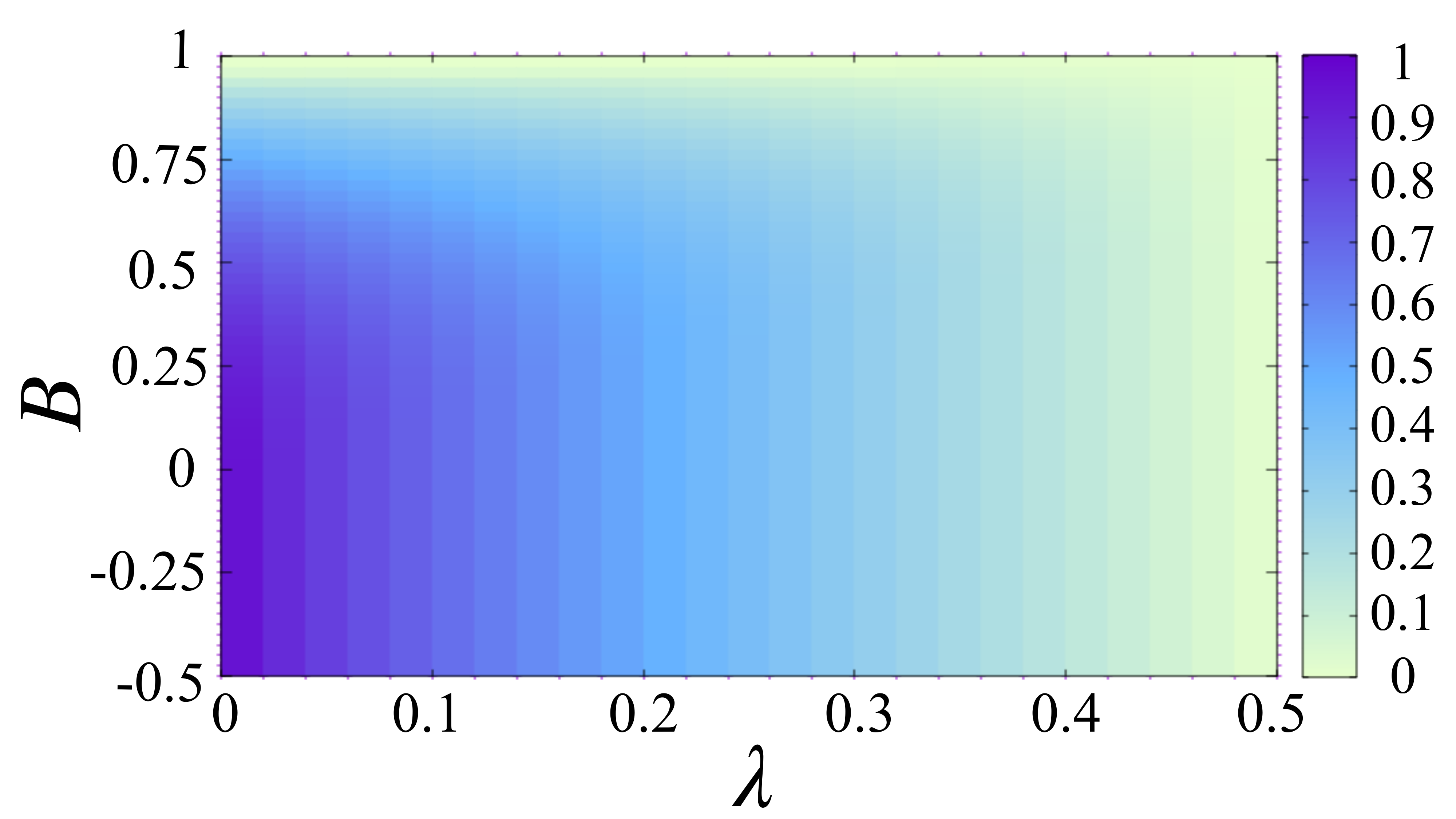}
    
    \caption{Projection of private channel capacity-power function $P_{1}^{Q}(B)$ for amplitude damping channel on $B$-$\lambda$ plane.}
    \label{fig:P(B)_for amp_damp_map}
\end{figure}
\subsubsection{Private Capacity-Power Function for Depolarizing Channel}
A depolarizing channel with noise parameter $\lambda$ takes an input state $\rho_{in}$ to the maximally mixed state with probability $\lambda$ and keeps it unaltered with probability $\lambda$. 
The isometric extension of this channel is given by 
\begin{align}
U_{A\rightarrow BE} &= \sqrt{1-\lambda} \: \mathds{1} \otimes \ket{0} \notag \\
&+ \sqrt{\lambda / 3}\left(\sigma_{1}\otimes\ket{1} + \sigma_{2}\otimes \ket{2}+\sigma_{3}\otimes \ket{3}\right).
\end{align}
To calculate the private capacity-power function of this channel, we consider an ensemble of 3 random pure qubit states as we took for unassisted case, with corresponding probabilities and maximized over all such probabilities. The results are plotted in Figs.~\ref{fig:P(B)_for _depol_3d} and \ref{fig:P(B)_for_depol_map}.
\begin{figure}
    \centering
    \includegraphics[scale = 0.13]{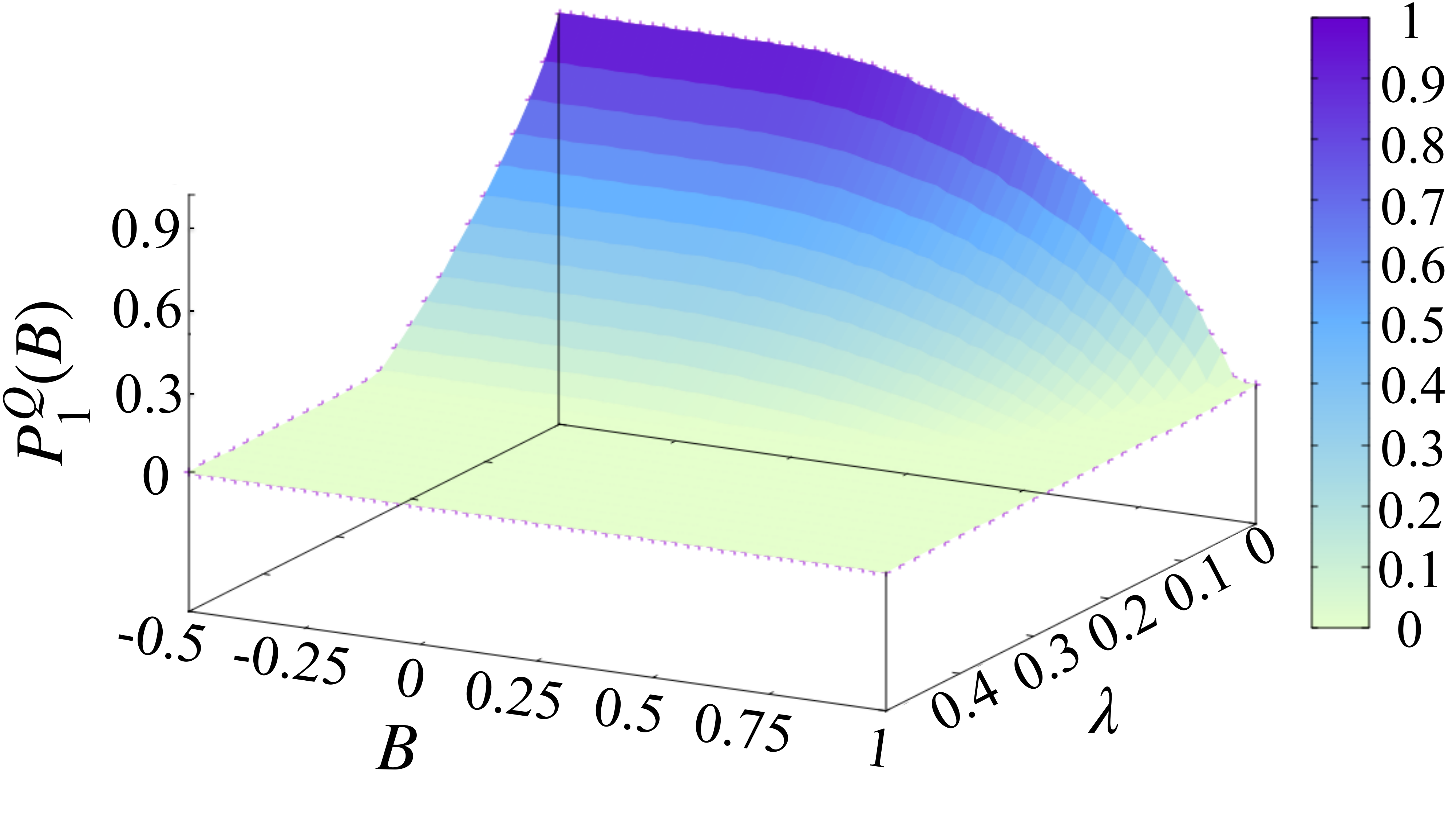}
    
    \caption{Variation of private channel capacity-power function $P_{1}^{Q}(B)$ for depolarizing channel with the channel noise parameter $\lambda$.}
    \label{fig:P(B)_for _depol_3d}
\end{figure}

\begin{figure}
    \centering
    \includegraphics[scale = 0.13]{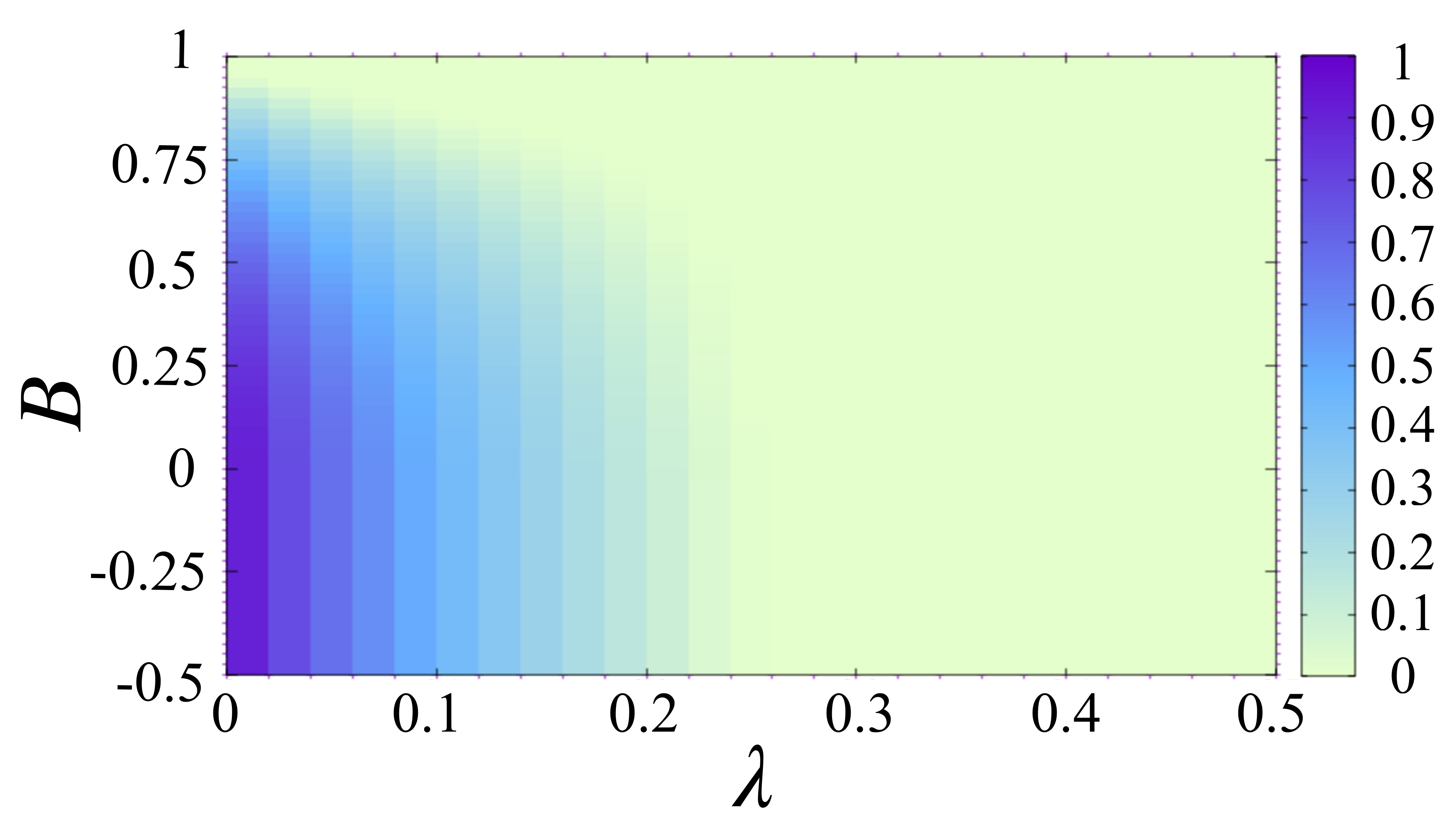}
    
    \caption{Projection of private channel capacity-power function $P_{1}^{Q}(B)$ for depolarizing channel on $B$-$\lambda$ plane.}
    \label{fig:P(B)_for_depol_map}
\end{figure}
The concavity and the additivity properties for the private capacity-power function will still hold if we have multiple compatible constraints similar to what we discussed for unassisted cases in Sec.~\ref{multiple_cons}.

\section{Discussion and Comments}
Quantum information is the information encoded in quantum states. Quantum information, quantified suitably, captures the communication capacity one can achieve with quantum states between a sender and a receiver. In this work, we have considered the scenario where we desire simultaneous transmission of energy and information. 

We establish the concavity and additivity properties of capacity for classical-quantum channels and for private communication through any degradable classical-quantum channel with a minimum output energy threshold. These properties are also true in the case of a noiseless channel with  pure states as input. 
For more general channels, we show that the capacity-power function is piecewise concave. 
Additionally, we perform numerical analysis of the unassisted  and private capacity, providing quantitative results that enrich our understanding of these channels' information-carrying capabilities. These theoretical and numerical findings offer key insights into quantum channels under power delivery constraints. We give an analytical expression of the capacity-power function for a noiseless channel. We use properties of random quantum states and the notion of typicality to argue that the capacity-power function can be obtained from the typical distribution which obeys an energy constraint. 
 
 This foundation can be built upon to study the impact of energy constraints in other aspects of quantum communication, such as entanglement generation and quantum error correction. Understanding how energy requirements affect these crucial tasks can contribute to the development of resource-aware quantum information protocols.
Further, this gives a natural extension to unidirectional bipartite communication over quantum channels under additional physical constraints.  
Advancing this program will explore optimized communication in multipartite scenarios, especially with multiple physical constraints and degrees of freedom. For example, the constraint can involve not only a bound on energy but also on the expectation value of multiple Hermitian operators as well as multiple players.
Investigating such tradeoffs among energy, classical capacity, and private capacity can provide valuable insights into fundamental limits and optimization strategies in quantum information processing.

 It is also of interest to extend Blahut-Arimoto techniques \cite{Blahut,arimoto,effifcient_approx_channel_capa_classical,effifcient_approx_channel_capa_quantum,bla_ari_type,quant_blahut_ari,eff_bound_unknown_channel} for quantum capacity-power functions to efficiently optimize over states. The introduction of random matrix theory into this channel capacity framework can give us insights  and help us to build more understanding of channel capacity without going to rigorous optimization techniques.

\section*{Acknowledgment}
This work was supported in part by grant  DST/ICPS/QusT/Theme-3/2019/Q69, New faculty Seed Grant from IIT Madras, and National Science Foundation grant PHY-2112890. The authors were further supported, in part, by a grant from Mphasis to the Centre for Quantum Information, Communication, and Computing (CQuICC) at IIT Madras.

\nocite{*}

\bibliographystyle{unsrt}
\bibliography{quantum_info_energy}

\end{document}